\documentclass[12pt]{article}

\usepackage[margin=1in]{geometry}
\usepackage{amsmath,amssymb,amsthm,mathtools}
\usepackage{natbib}
\usepackage[colorlinks,citecolor=blue,linkcolor=blue,urlcolor=blue]{hyperref}
\usepackage{enumerate}
\usepackage{booktabs}
\usepackage{graphicx}
\usepackage{subcaption}
\usepackage{adjustbox}
\usepackage[flushleft]{threeparttable}
\usepackage{multirow,rotating,setspace}
\usepackage{xcolor}
\usepackage[normalem]{ulem}
\usepackage{etoolbox}

\allowdisplaybreaks

\renewcommand{\qed}{\rule{2mm}{2mm}}

\DeclareMathOperator*{\E}{E}
\DeclareMathOperator*{\Var}{Var}

\newcommand{\Prob}{\mathbb P}
\newcommand{\R}{\mathbb R}
\newcommand{\1}{\mathbf 1}
\newcommand{\calZ}{\mathcal Z}
\newcommand{\calA}{\mathcal A}
\newcommand{\calD}{\mathcal D}
\newcommand{\calS}{\mathcal S}
\newcommand{\calI}{\mathcal I}
\newcommand{\calE}{\mathcal E}
\newcommand{\calR}{\mathcal R}

\newtheorem{theorem}{Theorem}[section]
\newtheorem{lemma}{Lemma}[section]

\newtheorem{proposition}{Proposition}[section]

\theoremstyle{plain}
\newtheorem{definition}{Definition}[section]

\theoremstyle{definition}
\newtheorem{remark}{Remark}[section]
\newtheorem{assumption}{Assumption}[section]

\newtheorem{testprocedure}{Procedure}[section]

\AtEndEnvironment{remark}{~\qed}
\AtEndEnvironment{example}{}

\begin{document}

\def\spacingset#1{\renewcommand{\baselinestretch}{#1}\small\normalsize}
\spacingset{1}

\title{Randomization Tests in Randomized Saturation Designs}

\author{
Jizhou Liu \\
PHBS Business School\\
Peking University\\
\url{jizhou.liu@phbs.pku.edu.cn}
\and
Azeem M.\ Shaikh \\
Department of Economics\\
University of Chicago\\
\url{amshaikh@uchicago.edu}
\and
Liang Zhong \\
Faculty of Business and Economics\\
The University of Hong Kong\\
\url{samzl@hku.hk}
}

\maketitle

\onehalfspacing

\begin{abstract}
Randomized saturation designs are widely used to study spillover effects in clustered populations.
In these designs, clusters are first assigned to treatment saturation levels, and units are then randomized within clusters according to the assigned saturation.
This paper develops randomization tests for such experiments under several null hypotheses that arise naturally in spillover analysis.
For a fixed pair of saturation levels, we first study two individual-level hypotheses: a partially sharp null of no spillover effect for every untreated unit and a bounded null that restricts individual spillover effects by a prespecified constant.
Both hypotheses can be tested using a common conditional randomization framework, with finite-sample validity obtained by combining the same focal-unit relabeling distribution with null-specific statistics.
We then study weak average-spillover nulls and show that, although these nulls do not yield finite-sample exact conditional tests, studentized relabeling statistics deliver asymptotically valid randomization-based inference.
Finally, for multiple ordered saturation levels, we develop a finite-sample valid unconditional pairwise-imputation test for global monotonicity of spillover effects.
Simulations and an application to the Zomba Cash Transfer experiment illustrate the finite-sample behavior and practical implementation of the methods.
\end{abstract}

\noindent \textbf{Keywords:} causal inference; conditional randomization test; interference; randomized saturation design; spillover effects; studentization.

\hypersetup{pageanchor=false}
\thispagestyle{empty}
\newpage
\hypersetup{pageanchor=true}
\setcounter{page}{1}

\spacingset{1.7}

\section{Introduction}

Randomized saturation designs are a central tool for studying spillovers in clustered populations.
In a typical design, clusters are first randomized to different treatment saturation levels, and then units within each cluster are randomized to treatment subject to the assigned saturation.
The first stage creates variation in the intensity of exposure to treated peers, while the second stage separates own treatment from peer treatment intensity.
Such designs have been used in a wide range of applications and have generated an active literature on estimation and inference under partial interference \citep{hudgens2008,toulis2013,Basse2018,Basse2019,Imai2021,liutwostage}.

Despite this flexibility, valid inference in randomized saturation designs remains challenging for two related reasons.
First, the number of clusters is often too small to justify large-sample approximations, and cluster sizes may be highly heterogeneous (see Table~\ref{tab:empirical}).
Fisherian randomization tests \citep{Fisher1935Design} are attractive in this setting because they rely only on the known assignment mechanism and can deliver exact finite-sample \(p\)-values under sharp null hypotheses \citep[see][]{imbens2015causal}.
Second, however, many hypotheses of substantive interest concern \emph{spillover effects} or \emph{total effects}, and these hypotheses are typically only partially sharp on the full assignment space under interference.
They restrict selected exposure conditions and therefore do not impute the full schedule of missing potential outcomes \citep{zhong2024unconditional}.
As a result, a naive randomization test that resamples the full assignment vector need not generate a valid reference distribution.
This motivates conditioning on \emph{focal units} and restricting the reference distribution to assignments for which the relevant potential outcomes are imputable \citep{Athey2018,Basse2019}.

This paper develops randomization tests for randomized saturation designs under several spillover null hypotheses.
For a fixed pair of saturation levels \(s\) and \(s'\), we first study two individual-level nulls.
The first is a partially sharp null requiring \(Y_i(0,s)=Y_i(0,s')\) for every unit \(i\), where \(Y_i(0,s)\) denotes unit \(i\)'s potential outcome under control treatment status \(0\) and cluster saturation level \(s\).
The second is a bounded null requiring \(Y_i(0,s)-Y_i(0,s')\le \delta\) for every unit \(i\), where \(\delta\) is a researcher-chosen constant.
For these two nulls, conditioning on untreated focal units in clusters whose realized saturation is either \(s\) or \(s'\) converts the problem into a cluster-level relabeling problem.
Under the partially sharp null, the focal outcomes are invariant across relabelings.
Under the bounded null, the equality boundary provides a least-favorable imputation for the one-sided alternative.
Both tests therefore have finite-sample conditional validity.

We then consider weak average-spillover nulls.
These nulls do not require every individual spillover effect to satisfy a unit-level restriction; instead, they set a prespecified weighted average of spillover effects equal to zero.
Leading examples include the unit-average spillover effect and the equally weighted cluster-average spillover effect.
Such weak nulls are scientifically less restrictive than the individual-level equality and bounded nulls, but they do not impute the missing focal potential outcomes.
Consequently, the conditional relabeling distribution is not an exact finite-sample null distribution.
We show that, when paired with an appropriately studentized cluster-level statistic, the same relabeling device yields an asymptotically valid randomization-based test under regularity conditions for the corresponding weighted cluster-level array.

We also study hypotheses involving more than two saturation levels.
In many applications the object of interest is not a single pairwise contrast, but the entire shape of the spillover response as the saturation level changes.
For example, researchers may want to test whether untreated outcomes are monotone in the saturation level, or whether the data reveal nonmonotonicity.
For this global monotone null, pairwise conditional tests would require a multiple-testing or intersection procedure.
We instead develop an unconditional randomization test based on pairwise imputation \citep{zhong2024unconditional}.
The test compares the observed assignment with each reassignment in the design support, using only units that are untreated under both assignments and whose exposures lie in the ordered set covered by the null.
This construction yields finite-sample size control for the global monotone null and extends naturally to ordered-exposure nulls beyond the randomized saturation design.

The paper makes three contributions.
First, it develops finite-sample conditional randomization tests for partially sharp and bounded spillover nulls in multi-saturation two-stage designs.
Existing work on randomized saturation designs has primarily studied estimation and large-sample inference for direct and spillover effects \citep{hudgens2008,Basse2018,Imai2021,Gonzalo-two-stage,liutwostage}, while the conditional randomization-test literature has emphasized nonsharp nulls under interference more generally \citep{Aronow2012,Athey2018,Basse2019,puelz2021,basse2024,liu2026randomizationinferencetwosidedmarket,liu2026randomizationtestsswitchbackexperiments}.
Closest to our setting, \citet{Basse2019} study a two-stage design with a binary first stage and exactly one treated unit in each treated cluster.
We extend this line of work to multi-saturation designs with multiple treated units per cluster, which are common in empirical applications but create a richer assignment space and a more complicated partial-imputation problem.

Second, the paper contributes to inference for weak null hypotheses on average spillover effects.
Weak nulls restrict only average treatment or spillover effects, so exact finite-sample Fisherian validity is generally unavailable without additional structure \citep{chung2013,diciccio2017robust,ZHAO2021278,Wu2021,toulis2025asymptotic}.
In randomized saturation designs, weak spillover nulls are especially natural because researchers often want to test whether average outcomes for untreated units differ across saturation levels.
We show that the conditional relabeling construction can be paired with studentized cluster-level statistics to obtain asymptotically valid tests for such average spillover nulls. This connects
the Fisherian randomization-test perspective with Neymanian large-sample inference
in saturation designs.

Third, the paper extends randomization inference in randomized saturation designs beyond equality nulls.
Recent work has emphasized that scientifically meaningful null hypotheses need not require exact equality of potential outcomes, but may instead impose bounded, ordered, or monotonic restrictions \citep{Caughey2023,huang2025randomization}.
For bounded spillover nulls, we show how conditional randomization tests can be adapted to allow nonzero but bounded spillover effects.
For monotone spillover nulls across ordered saturation levels, we develop a randomization test that targets the global monotonicity restriction directly.
Relative to approaches based on local or pairwise ordered-exposure comparisons, our procedure tests the global null in a single randomization-based framework and applies beyond randomized saturation designs to more general ordered-exposure settings.

The rest of the paper is organized as follows.
Section~\ref{sec:setup} introduces the randomized saturation design, feasible untreated exposures, and the null hypotheses.
Section~\ref{sec:two-level-crt} develops finite-sample conditional tests for the partially sharp and bounded two-saturation nulls.
Section~\ref{sec:weak-null} presents the weak average null as an asymptotic extension.
Section~\ref{sec:monotone} studies the multiple-saturation monotone null and presents the pairwise-imputation test.
Section~\ref{sec:application_malawi} gives the empirical illustration.
Section~\ref{sec:conclusion} concludes.
The appendices contain the formal conditioning construction, proofs, and implementation details.

\section{Setup and Notation}\label{sec:setup}

We work under a finite-population framework, following \citet{Basse2018,Basse2019}.
There are \(K\) clusters indexed by \(j=1,\ldots,K\).
Cluster \(j\) contains the unit set \(\mathcal I_j\), with \(|\mathcal I_j|=n_j\), and the total number of units is \(N=\sum_{j=1}^K n_j\).
Let \([i]\) denote the cluster containing unit \(i\).
Potential outcomes are treated as fixed throughout; randomness comes only from the known assignment mechanism.

\subsection{Randomized saturation design}
\label{subsec:design}

The assignment proceeds in two stages: clusters are first assigned to treatment saturation labels, and units are then randomized within clusters conditional on the assigned label.
Let
\[
\calS=\{s_0,s_1,\ldots,s_L\}\subset[0,1]
\]
be the set of cluster-level saturation labels, where \(s_0=0\) denotes zero treatment saturation.
The label \(s_0\) is a pure control label only when no other intervention components are present.
Fix integers \(K_0,\ldots,K_L\) satisfying
\[
\sum_{\ell=0}^L K_\ell=K.
\]
These counts are fixed by design, with \(K_\ell\) denoting the number of clusters assigned to saturation label \(s_\ell\).

Let \(A_j\in\calS\) denote the saturation assigned to cluster \(j\), and write \(A=(A_1,\ldots,A_K)\).
In the first stage, \(A\) is assigned by complete randomization over
\[
\calA
=
\Bigl\{
 a\in\calS^K:
 \#\{j:a_j=s_\ell\}=K_\ell,\ \ell=0,\ldots,L
\Bigr\},
\]
so that
\[
\Pr(A=a)
=
\binom{K}{K_0,\ldots,K_L}^{-1}\mathbf 1\{a\in\calA\}.
\]

For each cluster \(j\) and saturation label \(s\in\calS\), let
\[
m_j(s)\in\{0,1,\ldots,n_j\}
\]
denote the prespecified number of treated units if \(A_j=s\), with \(m_j(0)=0\).
When \(s n_j\) is an integer, a natural choice is \(m_j(s)=s n_j\); otherwise \(m_j(s)\) may be determined by any rule fixed ex ante, such as \(\lfloor s n_j\rfloor\) or \(\lceil s n_j\rceil\).
When the labels are interpreted as ordered treatment intensities, we additionally require
\[
q<q' \quad\Longrightarrow\quad m_j(q)\le m_j(q')
\quad\text{for every cluster }j.
\]
Without this restriction, monotonicity in the label remains mathematically meaningful, but it need not coincide with monotonicity in the number of treated peers.

Let \(D_i\in\{0,1\}\) denote the treatment indicator for unit \(i\), and write \(D=(D_i)_{i=1}^N\).
Conditional on \(A=a\), the second stage randomizes independently across clusters, selecting exactly \(m_j(a_j)\) treated units uniformly without replacement within cluster \(j\).
Define
\[
\calD_j(s)
=
\left\{
 d_{\calI_j}\in\{0,1\}^{\calI_j}:
 \sum_{i\in\calI_j} d_i=m_j(s)
\right\}.
\]
Then
\[
\Pr(D=d\mid A=a)
=
\prod_{j=1}^K
\binom{n_j}{m_j(a_j)}^{-1}
\mathbf 1\{d_{\calI_j}\in\calD_j(a_j)\}.
\]
The full assignment is \(Z=(A,D)\), with support
\[
\calZ
=
\left\{
(a,d):
 a\in\calA,
 d_{\calI_j}\in\calD_j(a_j)\ \text{for all }j
\right\}.
\]
The designs studied in \citet{Basse2018,Basse2019,liutwostage} arise as the special case \(\calS=\{0,\pi_2\}\) for some \(\pi_2\in(0,1)\).

\subsection{Potential outcomes and feasible untreated exposures}\label{subsec:potential-outcomes}

For each unit \(i\) and feasible assignment \(z\in\calZ\), let \(Y_i(z)\) denote the potential outcome of unit \(i\) under assignment \(z\).
The observed outcome is
\[
Y_i^{\mathrm{obs}}=Y_i(Z^{\mathrm{obs}}),
\qquad i=1,\ldots,N.
\]

We impose the standard homogeneous partial-interference restriction for randomized saturation designs \citep[see, e.g.,][]{hudgens2008,Basse2018,Basse2019,Forastiere2021,Imai2021,Gonzalo-two-stage,liutwostage}.

\begin{assumption}[Homogeneous partial interference]\label{ass:hpi}
For any unit \(i\) and any two feasible assignments \(z=(a,d),z'=(a',d')\in\calZ\),
\[
Y_i(z)=Y_i(z')
\quad\text{whenever}\quad
d_i=d_i'
\quad\text{and}\quad
\sum_{k\in\calI_{[i]}}d_k
=
\sum_{k\in\calI_{[i]}}d_k'.
\]
\end{assumption}

Assumption~\ref{ass:hpi} has two components.
First, it rules out interference across clusters: assignments outside unit \(i\)'s cluster do not affect \(i\)'s outcome.
Second, within a cluster, the outcome depends on the assignment only through unit \(i\)'s own treatment status and the number of treated units in the cluster.
Conditional on these two quantities, the identities of the treated peers are irrelevant.

Because the design fixes the number of treated units in cluster \(j\) at \(m_j(s)\) whenever \(A_j=s\), Assumption~\ref{ass:hpi} allows us to index potential outcomes by own treatment status and cluster saturation whenever that exposure is feasible.
For cluster \(j\), define the feasible exposure set
\[
\calE_j
=
\{(0,s):m_j(s)<n_j\}
\cup
\{(1,s):m_j(s)>0\}.
\]
For \(i\in\calI_j\), the shorthand \(Y_i(r,s)\) is used only for \((r,s)\in\calE_j\).
The observed outcome can therefore be written as
\[
Y_i^{\mathrm{obs}}
=
Y_i\!\left(D_i^{\mathrm{obs}},A_{[i]}^{\mathrm{obs}}\right),
\]
where the realized exposure is necessarily feasible.

The spillover tests in this paper compare untreated exposures.
To avoid off-support untreated potential outcomes in the main theory, define the common untreated-support set
\[
\calS_0
=
\{s\in\calS:m_j(s)<n_j\ \text{for every }j=1,\ldots,K\}.
\]
All pairwise untreated contrasts in Sections~\ref{subsec:two-level-nulls}--\ref{sec:weak-null} use saturation levels \(s,s'\in\calS_0\).
The monotone null in Section~\ref{sec:monotone} is stated over ordered subsets \(\calS_M\subseteq\calS_0\).
Thus a full-saturation label with \(m_j(s)=n_j\) for some cluster is part of the assignment support, but it is not part of an untreated-spillover null unless the null is modified to use a cluster-specific feasible domain.

Assumption~\ref{ass:hpi} is a substantive exposure-mapping assumption, not a consequence of random assignment.
It combines a form of partial interference, which rules out cross-cluster effects, with a form of stratified interference, which reduces within-cluster interference to the cluster saturation.
Thus, the randomization justifies the assignment probabilities used below, but the interpretation of the resulting tests depends on whether the exposure mapping captures the relevant channels of interference.
If the exposure mapping is misspecified, a rejection may reflect either a violation of the stated exposure-defined null or a failure of the exposure mapping itself.
This is the standard interpretational issue that arises in randomization inference under exposure mappings \citep{basse2024}.

\subsection{Individual-level nulls for a pair of saturation levels}\label{subsec:two-level-nulls}

Fix two distinct saturation levels \(s,s'\in\calS_0\).
Throughout this subsection, the target comparison is the untreated spillover contrast
\[
Y_i(0,s)-Y_i(0,s').
\]
The sign convention is arbitrary but fixed: a positive value means that the untreated potential outcome is larger at saturation \(s\) than at saturation \(s'\).

The first null is the individual-level no-spillover null
\begin{equation}\label{eq:null-ps}
H_{0,\mathrm{PS}}^{s,s'}:
\qquad
Y_i(0,s)=Y_i(0,s')
\quad\text{for all } i=1,\ldots,N.
\end{equation}
This null is partially sharp on the full assignment space.
It links two untreated exposures but does not determine treated outcomes or outcomes under other saturation levels.
It becomes sharp for a suitable set of untreated focal units after the conditioning step in Section~\ref{sec:two-level-crt}.

The second individual-level null is a bounded spillover null.
For a prespecified constant \(\delta\in\R\), define
\begin{equation}\label{eq:null-bounded}
H_{\delta,\mathrm{B}}^{s,s'}:
\qquad
Y_i(0,s)-Y_i(0,s')\le \delta
\quad\text{for all } i=1,\ldots,N.
\end{equation}
This null is useful when the researcher wants to test whether the spillover effect exceeds a substantively meaningful threshold.
The equality version
\begin{equation}\label{eq:null-bounded-equality}
H_{\delta,\mathrm{E}}^{s,s'}:
\qquad
Y_i(0,s)-Y_i(0,s')=\delta
\quad\text{for all } i=1,\ldots,N
\end{equation}
will serve as the least favorable configuration for the finite-sample bounded-null test.

\section{Conditional Tests for Two Saturation Levels}\label{sec:two-level-crt}

This section develops the finite-sample results for pairwise nulls.
For a fixed pair \(s,s'\in\calS_0\), the construction has two steps.
First, we select untreated focal units from clusters whose observed saturation is either \(s\) or \(s'\).
Second, we generate the reference distribution by relabeling the two saturation labels across those clusters while preserving the observed numbers of \(s\)- and \(s'\)-clusters.
Under the partially sharp null, this conditioning makes the focal outcomes imputable.
Under the bounded null, the equality boundary gives a least-favorable imputation.
The weak average nulls do not share this finite-sample imputability property and are treated separately in Section~\ref{sec:weak-null}.

\subsection{Focal units and relabeling distribution}\label{subsec:conditioning-event}

Fix two distinct saturation levels \(s,s'\in\calS_0\).
The comparison uses only clusters whose observed saturation is one of these two levels.
Let
\[
J^{\mathrm{obs}}
:=
\{j:A_j^{\mathrm{obs}}\in\{s,s'\}\}
\]
be this set of contrasted clusters.
For each cluster \(j\), choose an integer \(k_j\) before observing the assignment such that
\begin{equation}\label{eq:kj-condition}
1\le k_j\le \min\{n_j-m_j(s),\,n_j-m_j(s')\}.
\end{equation}
Because \(s,s'\in\calS_0\), the right-hand side is positive for every cluster.
Condition~\eqref{eq:kj-condition} guarantees that cluster \(j\) has at least \(k_j\) untreated units available whether its saturation label is \(s\) or \(s'\).
After observing the assignment, for each \(j\in J^{\mathrm{obs}}\), sample
\[
U_j\subseteq \{i\in\calI_j:D_i^{\mathrm{obs}}=0\},
\qquad |U_j|=k_j,
\]
uniformly without replacement, independently across contrasted clusters.
Let
\[
U=\bigcup_{j\in J^{\mathrm{obs}}}U_j
\]
be the focal unit set.
Thus the test uses exactly \(k_j\) observed untreated focal units from each contrasted cluster and uses no units from clusters with other observed saturation labels.

The reference distribution is obtained by permuting the two saturation labels \(s\) and \(s'\) across the contrasted clusters.
Let
\[
A_J^{\mathrm{obs}}=(A_j^{\mathrm{obs}}:j\in J^{\mathrm{obs}}),
\]
and define the observed margins
\[
K_s:=\sum_{j\in J^{\mathrm{obs}}}\1\{A_j^{\mathrm{obs}}=s\},
\qquad
K_{s'}:=\sum_{j\in J^{\mathrm{obs}}}\1\{A_j^{\mathrm{obs}}=s'\}.
\]
The relabeling space is
\begin{equation}\label{eq:relabel-space}
\calA_J^{\mathrm{obs}}
=
\left\{
 a_J\in\{s,s'\}^{|J^{\mathrm{obs}}|}:
 \sum_{j\in J^{\mathrm{obs}}}\1\{a_j=s\}=K_s
\right\}.
\end{equation}
The relabeling distribution is uniform over \(\calA_J^{\mathrm{obs}}\).
Equivalently, a reference draw selects exactly \(K_s\) of the contrasted clusters to carry label \(s\), assigns label \(s'\) to the remaining \(K_{s'}\) contrasted clusters, and leaves all other clusters outside the comparison.
This is a cluster-level relabeling distribution: focal outcomes are not moved across units or clusters, and the unit-level treatment assignment is not re-randomized in the implementation.

For each contrasted cluster, define the observed focal-cluster mean
\begin{equation}\label{eq:focal-mean}
\widetilde Y_j^{\mathrm{obs}}
:=
\frac{1}{k_j}\sum_{i\in U_j}Y_i^{\mathrm{obs}},
\qquad j\in J^{\mathrm{obs}}.
\end{equation}
Under any relabeling \(a_J\in\calA_J^{\mathrm{obs}}\), the selected focal units remain untreated, and only their cluster saturation label changes from the observed label to the relabeled value.
Thus the focal outcomes are the relevant observed quantities for comparing the untreated exposures \((0,s)\) and \((0,s')\).
Under the partially sharp null \eqref{eq:null-ps}, these focal outcomes are invariant to the relabeling.
For the bounded null, they are combined with the least-favorable imputation described below.

The appendix gives the formal conditional-randomization construction under the framework of \cite{Basse2019} that justifies the uniform cluster-level relabeling distribution.
The main text only uses the resulting focal-unit and relabeling objects because those are the objects needed to implement the test.

\subsection{Statistics for the finite-sample nulls}\label{subsec:test-statistics}

Throughout, write \(J=J^{\mathrm{obs}}\), and let
\[
K_q=\sum_{j\in J}\1\{A_j^{\mathrm{obs}}=q\},
\qquad q\in\{s,s'\}.
\]
The relabeling space \(\calA_J^{\mathrm{obs}}\) preserves these margins.
We use upper-tail statistics, so large values provide evidence that untreated outcomes are larger at saturation \(s\) than at saturation \(s'\).

For the partially sharp null \(H_{0,\mathrm{PS}}^{s,s'}\), the conditioning step makes the focal outcomes imputable over \(\calA_J^{\mathrm{obs}}\).
Hence any focal statistic can be used.
Our default choice is the difference in focal-cluster means,
\begin{equation}\label{eq:stat-diffmean}
\hat\tau_U(a_J)
=
\frac{1}{K_s}
\sum_{j\in J}
\widetilde Y_j^{\mathrm{obs}}\1\{a_j=s\}
-
\frac{1}{K_{s'}}
\sum_{j\in J}
\widetilde Y_j^{\mathrm{obs}}\1\{a_j=s'\}.
\end{equation}

For the bounded null \(H_{\delta,\mathrm B}^{s,s'}\), we use the equality boundary \(H_{\delta,\mathrm E}^{s,s'}\) in \eqref{eq:null-bounded-equality} only to construct the least-favorable imputation.
Under this equality boundary, the focal-cluster mean under saturation \(s'\) is imputed as
\[
\widetilde Y_{j,\delta}^{s'}
:=
\widetilde Y_j^{\mathrm{obs}}
-
\delta\,\1\{A_j^{\mathrm{obs}}=s\},
\]
and the corresponding imputed focal-cluster mean under saturation \(s\) is
\[
\widetilde Y_{j,\delta}^{s}
=
\widetilde Y_{j,\delta}^{s'}+\delta .
\]
Thus, for a relabeling \(a_J\), the imputed observed focal-cluster outcome is
\[
\widetilde Y_{j,\delta}(a_j)
=
\widetilde Y_{j,\delta}^{s'}
+
\delta\,\1\{a_j=s\}.
\]
The literal imputed difference-in-means statistic is therefore
\[
T_{U,\delta}^{\mathrm{full}}(a_J)
=
\frac{1}{K_s}
\sum_{j\in J}
\left(\widetilde Y_{j,\delta}^{s'}+\delta\right)\1\{a_j=s\}
-
\frac{1}{K_{s'}}
\sum_{j\in J}
\widetilde Y_{j,\delta}^{s'}\1\{a_j=s'\}.
\]
Because every relabeling in \(\calA_J^{\mathrm{obs}}\) has exactly \(K_s\) clusters labeled \(s\), this statistic can be written as
\[
T_{U,\delta}^{\mathrm{full}}(a_J)
=
\delta+T_{U,\delta}^{\mathrm B}(a_J),
\]
where
\begin{equation}\label{eq:stat-bounded}
T_{U,\delta}^{\mathrm B}(a_J)
=
\frac{1}{K_s}
\sum_{j\in J}
\widetilde Y_{j,\delta}^{s'}\1\{a_j=s\}
-
\frac{1}{K_{s'}}
\sum_{j\in J}
\widetilde Y_{j,\delta}^{s'}\1\{a_j=s'\}.
\end{equation}
The additive constant \(\delta\) is common to all relabelings and therefore does not affect the randomization \(p\)-value.
We can therefore compute the bounded-null test using the simpler \(s'\)-anchored statistic in \eqref{eq:stat-bounded}.
At the observed relabeling,
\[
T_{U,\delta}^{\mathrm B}(A_J^{\mathrm{obs}})
=
\hat\tau_U(A_J^{\mathrm{obs}})-\delta,
\]
whereas
\[
T_{U,\delta}^{\mathrm{full}}(A_J^{\mathrm{obs}})
=
\hat\tau_U(A_J^{\mathrm{obs}}).
\]
Thus the \(s'\)-anchored formulation does not ignore saturation \(s\); it simply removes the common shift \(\delta\) from the full imputed statistic. The same construction may be applied to focal-count weighted differences in
means, which are often convenient in applications with unequal focal-set sizes,
provided the statistic is monotone in the same least-favorable direction.

For any statistic \(T_U(a_J)\) defined on \(\calA_J^{\mathrm{obs}}\), the corresponding one-sided conditional randomization \(p\)-value is
\[
p_T
=
\frac{1}{|\calA_J^{\mathrm{obs}}|}
\sum_{a_J\in\calA_J^{\mathrm{obs}}}
\1\!\left\{
T_U(a_J)\ge T_U(A_J^{\mathrm{obs}})
\right\}.
\]
For the partially sharp and bounded nulls, we use \(T_U=\hat\tau_U\) and \(T_U=T_{U,\delta}^{\mathrm B}\), respectively.

\subsection{The conditional randomization test}\label{subsec:crt-procedure}

The following procedure summarizes the finite-sample conditional testing algorithm.
The permutation step is at the cluster level: we permute the saturation labels of the contrasted clusters and keep the focal units and their observed outcomes fixed.

\begin{testprocedure}[Two-saturation conditional randomization test]\label{proc:two-level-crt}
Fix two saturation levels \(s\neq s'\) in \(\calS_0\), focal-set sizes \(\{k_j\}\) satisfying \eqref{eq:kj-condition}, and a statistic \(T_U(a_J)\) defined on \(\calA_J^{\mathrm{obs}}\).
\begin{enumerate}[(1)]
\item \textit{Select focal controls.}
For each \(j\in J^{\mathrm{obs}}\), sample \(k_j\) units uniformly without replacement from
\[
\{i\in\calI_j:D_i^{\mathrm{obs}}=0\}.
\]
Let \(U_j\) be the selected set in cluster \(j\), and let \(U=\cup_{j\in J^{\mathrm{obs}}}U_j\).

\item \textit{Construct the relabeling space.}
Form \(\calA_J^{\mathrm{obs}}\) as in \eqref{eq:relabel-space}.

\item \textit{Evaluate the statistic.}
For each \(a_J\in\calA_J^{\mathrm{obs}}\), compute \(T_U(a_J)\).
For the partially sharp null, one may use \eqref{eq:stat-diffmean} or any other focal statistic.
For the bounded null, use the shifted statistic \eqref{eq:stat-bounded}.

\item \textit{Compute the conditional \(p\)-value.}
For a one-sided alternative with large values of the statistic unfavorable to the null, set
\begin{equation}\label{eq:generic-crt-pvalue}
p_T
=
\frac{1}{|\calA_J^{\mathrm{obs}}|}
\sum_{a_J\in\calA_J^{\mathrm{obs}}}
\1\left\{
T_U(a_J)\ge T_U(A_J^{\mathrm{obs}})
\right\}.
\end{equation}
\end{enumerate}
\end{testprocedure}

The test can be implemented by enumerating \(\calA_J^{\mathrm{obs}}\) or by drawing Monte Carlo relabelings uniformly from \(\calA_J^{\mathrm{obs}}\).
For a Monte Carlo implementation of the finite-sample tests, the usual plus-one correction can be used to preserve conservativeness of the simulated relabeling \(p\)-value.

\begin{theorem}[Finite-sample conditional validity for individual-level two-saturation nulls]\label{thm:two-level-exact}
Fix \(s\neq s'\) in \(\calS_0\), and suppose the randomized saturation design in Section~\ref{subsec:design} is used with focal-set sizes satisfying \eqref{eq:kj-condition}.
Let \(U\) be the focal set generated in Procedure~\ref{proc:two-level-crt}.
Then the following statements hold.

\begin{enumerate}[(a)]
\item \textit{Partially sharp null.}
Under \(H_{0,\mathrm{PS}}^{s,s'}\) in \eqref{eq:null-ps}, the permutation test in Procedure~\ref{proc:two-level-crt} is finite-sample valid for any statistic based only on the focal outcomes and relabeling vector.
That is, for every \(\alpha\in[0,1]\),
\[
\Prob\{p_T\le \alpha\mid U\}\le \alpha.
\]

\item \textit{Bounded null.}
Under the bounded null \(H_{\delta,\mathrm B}^{s,s'}\) in \eqref{eq:null-bounded}, the permutation test in Procedure~\ref{proc:two-level-crt} with the shifted statistic \(T_{U,\delta}^{\mathrm B}\) in \eqref{eq:stat-bounded} is finite-sample valid:
\[
\Prob\{p_T\le \alpha\mid U\}\le \alpha
\qquad
\text{for every } \alpha\in[0,1].
\]
The same conclusion holds for any one-sided statistic satisfying the same least-favorable monotonicity property as \(T_{U,\delta}^{\mathrm B}\).
\end{enumerate}
\end{theorem}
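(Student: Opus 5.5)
The proof establishes two facts and then applies a standard permutation argument. First, conditional on the focal set \(U\) (together with \(J^{\mathrm{obs}}\) and the margins \(K_s,K_{s'}\)), the observed contrasted label vector \(A_J^{\mathrm{obs}}\) is uniformly distributed over \(\calA_J^{\mathrm{obs}}\). Second, under the null, the statistic is either invariant across relabelings or stochastically dominated by an invariant statistic.

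For the first step we compute the joint probability of \((A,D,U)\) directly. Fix any realization \(u\) of the focal set. The relevant event fixes \(J\), the labels of all clusters outside \(J\), and \(u\). For a candidate \(a_J\in\calA_J^{\mathrm{obs}}\), the probability of \(\{A_J=a_J,\ \text{labels outside }J\text{ as observed},\ U=u\}\) factors as a product of three terms. The first is the first-stage term \(\binom{K}{K_0,\ldots,K_L}^{-1}\), which does not depend on \(a_J\). The second is, for each \(j\in J\), the probability that all of \(U_j\) is untreated, \(\binom{n_j-k_j}{m_j(a_j)}/\binom{n_j}{m_j(a_j)}\). The third is the probability that \(U_j\) is the selected subset of the \(n_j-m_j(a_j)\) untreated units, namely \(\binom{n_j-m_j(a_j)}{k_j}^{-1}\).

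The product of the second and third terms simplifies to \(\binom{n_j}{k_j}^{-1}\), since both sides count \(\frac{(n_j-k_j)!\,m_j!\,(n_j-m_j)!}{m_j!\,(n_j-k_j-m_j)!\,n_j!}\cdot\frac{k_j!\,(n_j-m_j-k_j)!}{(n_j-m_j)!}=\frac{k_j!(n_j-k_j)!}{n_j!}\). Condition \eqref{eq:kj-condition} guarantees these quantities are well defined under both labels. The product is therefore free of \(a_J\), so the conditional law of \(A_J^{\mathrm{obs}}\) given \(U\) is uniform on \(\calA_J^{\mathrm{obs}}\). This is the computation I expect to be the main obstacle: one must carefully track which variables are conditioned on (labels outside \(J\), the set \(J\) itself, the margins) and verify that the \(m_j(a_j)\)-dependence cancels exactly. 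The appendix construction following \citet{Basse2019} presumably formalizes this.

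For part (a), under \(H_{0,\mathrm{PS}}^{s,s'}\) and Assumption~\ref{ass:hpi}, each focal unit \(i\in U_j\) is untreated and \(Y_i^{\mathrm{obs}}=Y_i(0,A_j^{\mathrm{obs}})=Y_i(0,s)=Y_i(0,s')\). Given \(U\), the focal outcomes are therefore fixed constants not depending on \(A_J^{\mathrm{obs}}\). It follows that \(a_J\mapsto T_U(a_J)\) is a fixed function on \(\calA_J^{\mathrm{obs}}\), and that \(A_J^{\mathrm{obs}}\) is uniform on that set. The usual argument finishes the proof: for a uniform draw \(G\) from a finite set and a fixed function \(T\), the quantity \(p=|\{g:T(g)\ge T(G)\}|/|\calA|\) satisfies \(\Prob(p\le\alpha)\le\alpha\), since \(p\) is the upper-tail probability of \(T(G)\) and hence super-uniform.

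For part (b), we set \(c_j:=\frac{1}{k_j}\sum_{i\in U_j}Y_i(0,s')\), which is fixed given \(U\). Under the bounded null, \(\widetilde Y_j^{\mathrm{obs}}=c_j+\Delta_j\1\{A_j^{\mathrm{obs}}=s\}\), where \(\Delta_j\le\delta\) is the focal average of \(Y_i(0,s)-Y_i(0,s')\). Hence \(\widetilde Y^{s'}_{j,\delta}=c_j-(\delta-\Delta_j)\1\{A_j^{\mathrm{obs}}=s\}\). Define the oracle statistic \(T^*(a_J)\) as the difference in means of \(c_j\) under labeling \(a_J\); it is a fixed function, so part (a)'s argument gives \(\Prob\{p^*\le\alpha\mid U\}\le\alpha\) for its \(p\)-value \(p^*\).

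The key comparison is between \(T^{\mathrm B}_{U,\delta}\) and \(T^*\). At the observed labeling, \(T^{\mathrm B}_{U,\delta}(A_J^{\mathrm{obs}})=T^*(A_J^{\mathrm{obs}})-\frac{1}{K_s}\sum_{j:A^{\mathrm{obs}}_j=s}(\delta-\Delta_j)\), which is the maximal downward shift. For any other \(a_J\), the subtracted nonnegative terms \((\delta-\Delta_j)\) are spread between the weights \(1/K_s\) and \(-1/K_{s'}\). As a result, \(T^{\mathrm B}_{U,\delta}(a_J)-T^*(a_J)\ge T^{\mathrm B}_{U,\delta}(A_J^{\mathrm{obs}})-T^*(A_J^{\mathrm{obs}})\). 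It follows that \(T^*(a_J)\ge T^*(A^{\mathrm{obs}}_J)\) implies \(T^{\mathrm B}_{U,\delta}(a_J)\ge T^{\mathrm B}_{U,\delta}(A^{\mathrm{obs}}_J)\), so \(p_T\ge p^*\) and validity follows.

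The extension to "any statistic with the same least-favorable monotonicity" is stated so that exactly this pointwise domination holds: shifting \(s\)-labeled observed clusters downward lowers the observed statistic at least as much as it lowers any relabeled statistic.
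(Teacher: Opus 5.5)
Your proposal is correct and follows essentially the same route as the paper. The cluster-wise cancellation $\binom{n_j-k_j}{m_j(a_j)}\binom{n_j-m_j(a_j)}{k_j}^{-1}\binom{n_j}{m_j(a_j)}^{-1}=\binom{n_j}{k_j}^{-1}$ gives the uniform relabeling law (the paper conditions on the formal event $C=(U,\calZ_{s,s'}(U,Z^{\mathrm{obs}}))$, which is equivalent to your conditioning on $U$ and the labels outside $J$, and passes to conditioning on $U$ alone by iterated expectations), part (a) is the standard super-uniformity argument, and part (b) is exactly your comparison with the oracle statistic built from the true focal $s'$-means, whose downward shift $\frac{1}{K_s}\sum_j W_j(\delta-\tau_j)$ is maximal at the observed labeling, so that $p_T\ge p^\star$.
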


Theorem~\ref{thm:two-level-exact} is the finite-sample core of the paper.
The focal-unit selection and the cluster-level relabeling distribution are common to the two tests, but the statistic is tailored to the null.
For the partially sharp null, the null itself makes the focal outcomes invariant to relabeling, so any focal statistic is valid.
For the bounded null, the equality null \(H_{\delta,\mathrm E}^{s,s'}\) is least favorable for the one-sided alternative, so imputing under this equality yields a conservative finite-sample test of the inequality null.

\begin{remark}[Choice of focal-set sizes]\label{rem:choice-kj}
The integers \(k_j\) must be chosen independently of the realized labels \(A_j\in\{s,s'\}\).
This ensures that the probability of selecting the observed focal set is the same under every admissible relabeling.
In practice, a simple choice is
\[
k_j=\min\{n_j-m_j(s),\,n_j-m_j(s')\},
\]
which uses as many untreated focal units as possible while preserving symmetry across the two saturation labels.
Smaller choices may be useful when computation is costly or when the analyst wants identical focal-set sizes across clusters.
\end{remark}

\section{Weak Average Nulls as an Asymptotic Extension}\label{sec:weak-null}

The preceding section relies on unit-level restrictions that either impute the focal potential outcomes or admit a least-favorable imputation, yielding finite-sample conditional tests.
We now replace these restrictions with weaker zero-average restrictions.
These nulls are scientifically less restrictive but do not determine the missing focal potential outcomes.
Consequently, the conditional relabeling distribution is no longer an exact finite-sample null distribution; it is instead used to calibrate a studentized statistic with unconditional asymptotic size control.

\subsection{Weak average nulls for a pair of saturation levels}\label{subsec:weak-nulls}

The weak nulls replace the unit-level restrictions above with average restrictions.
For cluster \(j\), define
\[
\mu_{a,j}
:=
\frac{1}{n_j}\sum_{i\in\calI_j}Y_i(0,a),
\qquad a\in\{s,s'\},
\]
and
\[
\tau_j^{s,s'}
:=
\mu_{s,j}-\mu_{s',j}
=
\frac{1}{n_j}\sum_{i\in\calI_j}\{Y_i(0,s)-Y_i(0,s')\}.
\]
Let \(\lambda_{1K},\ldots,\lambda_{KK}\) be nonnegative, nonrandom weights chosen by the researcher, with
\[
\sum_{j=1}^K\lambda_{jK}=1.
\]
The weighted weak null is
\begin{equation}\label{eq:null-weak-weighted}
H_{0,\mathrm W}^{s,s'}(\lambda):
\qquad
\bar\tau_{\lambda,K}^{s,s'}
:=
\sum_{j=1}^K\lambda_{jK}\tau_j^{s,s'}
=
0.
\end{equation}

Two special cases are useful.
If \(\lambda_{jK}=1/K\), then \eqref{eq:null-weak-weighted} is the equally weighted cluster-average weak null,
\begin{equation}\label{eq:null-weak-ew}
H_{0,\mathrm W,\mathrm{eq}}^{s,s'}:
\qquad
\bar\tau_{K}^{s,s'}
:=
\frac{1}{K}\sum_{j=1}^K\tau_j^{s,s'}
=
0.
\end{equation}
If \(\lambda_{jK}=n_j/N\), then \eqref{eq:null-weak-weighted} is the unit-average weak null,
\begin{equation}\label{eq:null-weak-cw}
H_{0,\mathrm W,N}^{s,s'}:
\qquad
\bar\tau_N^{s,s'}
:=
\frac{1}{N}\sum_{j=1}^K n_j\tau_j^{s,s'}
=
\frac{1}{N}\sum_{j=1}^K\sum_{i\in\calI_j}\{Y_i(0,s)-Y_i(0,s')\}
=
0.
\end{equation}
The two targets coincide when cluster sizes are constant.

Unlike the partially sharp and bounded nulls, the weighted weak null does not determine the missing focal potential outcomes.
The relabeling distribution is therefore not an exact finite-sample null distribution in general.
The remainder of this section treats weak-null inference as an asymptotic extension based on studentized relabeling statistics.

\subsection{Studentized relabeling statistic and asymptotic validity}\label{subsec:weak-statistic}

Fix \(s\neq s'\) in \(\calS_0\), and use the focal set and relabeling space from Section~\ref{subsec:conditioning-event}.
Write \(J=J^{\mathrm{obs}}\).
For the weighted weak null \(H_{0,\mathrm W}^{s,s'}(\lambda)\), define
\[
w_{jK}:=K\lambda_{jK},
\qquad
X_j^{\mathrm{obs}}:=w_{jK}\widetilde Y_j^{\mathrm{obs}}.
\]
The normalization \(w_{jK}=K\lambda_{jK}\) is convenient because \(K^{-1}\sum_{j=1}^K w_{jK}=1\).

For \(q\in\{s,s'\}\) and \(a_J\in\calA_J^{\mathrm{obs}}\), define
\[
\bar X_U(q;a_J)
=
\frac{1}{K_q}
\sum_{j\in J}
X_j^{\mathrm{obs}}\1\{a_j=q\},
\]
and, when \(K_q\ge 2\),
\[
\hat S_{X,U}^2(q;a_J)
=
\frac{1}{K_q-1}
\sum_{j\in J}
\left\{X_j^{\mathrm{obs}}-\bar X_U(q;a_J)\right\}^2
\1\{a_j=q\}.
\]
The Neyman variance estimator for the transformed focal-cluster outcomes is
\[
\hat V_{X,U}^{\mathrm{Ney}}(a_J)
=
\frac{\hat S_{X,U}^2(s;a_J)}{K_s}
+
\frac{\hat S_{X,U}^2(s';a_J)}{K_{s'}}.
\]
The studentized statistic is
\begin{equation}\label{eq:stat-studentized}
T_{U,\lambda}^{\mathrm{Ney}}(a_J)
=
\frac{\bar X_U(s;a_J)-\bar X_U(s';a_J)}
{\sqrt{\hat V_{X,U}^{\mathrm{Ney}}(a_J)}}.
\end{equation}
When the denominator is zero, we use a fixed deterministic convention, such as setting the statistic to zero.
Assumption~\ref{ass:weak-regularity} in Appendix~\ref{app:weak-regularity}
rules out this degeneracy in the limit.
The corresponding upper-tail relabeling \(p\)-value is defined by \eqref{eq:generic-crt-pvalue} with \(T_U=T_{U,\lambda}^{\mathrm{Ney}}\); write this \(p\)-value as \(p_{T,\lambda}\).

The equally weighted cluster-average weak null corresponds to \(w_{jK}=1\), in which case \(T_{U,\lambda}^{\mathrm{Ney}}\) reduces to the unweighted studentized statistic.
The unit-average weak null corresponds to
\[
w_{jK}=\frac{n_j}{N/K}.
\]
Since multiplication of all transformed cluster outcomes by a common positive constant cancels after studentization, the unit-average version can equivalently be implemented by replacing each focal-cluster mean by \(n_j\widetilde Y_j^{\mathrm{obs}}\).

\begin{theorem}[Unconditional asymptotic validity for weighted weak average nulls]\label{thm:weak-null-asymptotic}
Fix \(s\neq s'\) in \(\calS_0\), and suppose the randomized saturation design in Section~\ref{subsec:design} is used with focal-set sizes satisfying \eqref{eq:kj-condition}.
Let \(\lambda_{1K},\ldots,\lambda_{KK}\) be nonnegative, nonrandom weights summing to one, and let \(p_{T,\lambda}\) be the relabeling \(p\)-value computed from \(T_{U,\lambda}^{\mathrm{Ney}}\) in \eqref{eq:stat-studentized}.
If the weighted weak null \(H_{0,\mathrm W}^{s,s'}(\lambda)\) in \eqref{eq:null-weak-weighted} is true and certain regularity conditions defined in Assumption~\ref{ass:weak-regularity} hold, then
\[
\limsup_{K\to\infty}
\Prob\{p_{T,\lambda}\le \alpha\}
\le \alpha
\qquad
\text{for every } \alpha\in(0,1/2).
\]
\end{theorem}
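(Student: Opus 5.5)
The plan is the standard studentized-permutation argument of \citet{chung2013}, adapted to the two-stage design.

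\textbf{Step 1: reduce to a cluster-level design.} Conditional on \(A\), focal units are drawn uniformly from the untreated units of each cluster, and the untreated set is itself uniform. Hence, marginally over the second stage, \(U_j\) is a uniform \(k_j\)-subset of \(\calI_j\), drawn independently across clusters and independently of the label. By Assumption~\ref{ass:hpi}, each such unit has outcome \(Y_i(0,A_j)\). Define
\[
X_j(q)=w_{jK}\,k_j^{-1}\sum_{i\in U_j}Y_i(0,q),\qquad q\in\{s,s'\}.
\]
Then \(\E[X_j(q)]=w_{jK}\mu_{q,j}\). Moreover, \(J^{\mathrm{obs}}\) is a uniformly random subset of size \(K_\ell+K_{\ell'}\), and within \(J^{\mathrm{obs}}\) the labels \(s,s'\) are completely randomized. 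The observed data are therefore equivalent to a completely randomized experiment on a random subsample of clusters, with stochastic cluster-level potential outcomes \(X_j(s),X_j(s')\). I would condition on the first-stage subsample and the within-cluster draws, obtaining a fixed array \(\{X_j(s),X_j(s')\}\), and then let the label assignment be the only source of randomness.

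\textbf{Step 2: limiting law of the observed statistic.} Apply a finite-population CLT for complete randomization (Hájek/Li--Ding) to
\[
\bar X(s)-\bar X(s')-\frac{1}{|J|}\sum_{j\in J}\{X_j(s)-X_j(s')\}.
\]
Next, handle the centering. Under \(H_{0,\mathrm W}^{s,s'}(\lambda)\), the full-population weighted average \(K^{-1}\sum_j w_{jK}\tau_j\) equals \(0\). The average over \(J\) and over the focal draws is unbiased for it, and its fluctuation is of the same order. Combining the first-stage sampling, the focal sampling and the label assignment yields an asymptotically normal numerator with variance \(V\). I would verify that \(V\) is bounded by the Neyman limit \(\sigma_s^2/\rho_s+\sigma_{s'}^2/\rho_{s'}\), where the \(\sigma^2\) are population variances of \(X_j(q)\). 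The Neyman estimator \(\hat V^{\mathrm{Ney}}\) converges to that limit in probability, possibly conservatively, since the covariance term is dropped and sampling from \(J\) only adds variance already captured by the within-arm variances. The regularity conditions in Assumption~\ref{ass:weak-regularity} (Lindeberg/bounded moments of \(w_{jK}Y\), \(K_q/K\to\rho_q>0\), nondegenerate limit variances) supply exactly these ingredients. The conclusion is that \(T^{\mathrm{Ney}}(A^{\mathrm{obs}}_J)\) converges in distribution to \(N(0,\kappa^2)\) with \(\kappa\le 1\).

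\textbf{Step 3: limiting law of the relabeling distribution.} Holding the observed pooled values \(\{X_j^{\mathrm{obs}}\}_{j\in J}\) fixed and permuting labels gives a two-sample permutation distribution. By the combinatorial CLT together with the studentization argument of \citet{chung2013} (Hoeffding/Wald--Wolfowitz with a permutation LLN for the variance), the permutation distribution of \(T^{\mathrm{Ney}}\) converges weakly in probability to \(N(0,1)\). This requires only a Noether/Lindeberg-type condition on the pooled \(X_j^{\mathrm{obs}}\), which follows from the moment conditions with high probability. The critical value, the \(1-\alpha\) relabeling quantile, therefore converges in probability to \(z_{1-\alpha}\).

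\textbf{Conclusion and main obstacle.} By Slutsky,
\[
\limsup_{K\to\infty}\Prob\{p_{T,\lambda}\le\alpha\}
=\limsup_{K\to\infty}\Prob\{T\ge z_{1-\alpha}+o_p(1)\}
\le 1-\Phi\bigl(z_{1-\alpha}/\kappa\bigr)\le\alpha,
\]
using \(\kappa\le1\) and \(z_{1-\alpha}>0\) for \(\alpha<1/2\). I expect Step 2 to be the main obstacle. The estimand is a population average while only a random subset of clusters and a random subset of units enters, so the multi-stage variance must be shown to be dominated by the probability limit of \(\hat V^{\mathrm{Ney}}\). My first attempt would use a law-of-total-variance decomposition: inner focal-sampling variance plus outer cluster-assignment variance. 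I would then show that the Neyman estimator's expectation equals the sum of the within-arm variances of the \(X_j(q)\), which include the focal-sampling noise, so the only omitted term is the usual nonnegative heterogeneity term.
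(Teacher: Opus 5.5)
Your proposal is correct and follows essentially the paper's own argument. The paper uses the same sequential representation (uniform contrasted set, label-independent uniform focal sets, complete relabeling within \(J\)) and the same three-way split of the numerator into contrasted-set sampling, focal sampling, and label randomization, with total limiting variance \(D_X-\kappa V_{X,\mu}\) in the paper's notation; this falls short of the Neyman limit \(D_X\) by exactly the nonnegative heterogeneity term you anticipate, and the paper then invokes the same studentized permutation CLT to get standard normal critical values.
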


Theorem~\ref{thm:weak-null-asymptotic} uses the same focal set and relabeling space as the finite-sample tests, but its validity statement is different.
The randomization distribution used for calibration is conditional on the realized contrasted clusters and focal outcomes, whereas the size guarantee is asymptotic and unconditional over the contrasted cluster set, focal sampling, and saturation assignment.
The distinction matters because the population weak null \(\sum_j\lambda_{jK}\tau_j^{s,s'}=0\) does not imply that the realized contrasted clusters have exactly zero weighted average spillover, nor does it determine the missing focal potential outcomes.

\section{Multiple Saturation Levels and Monotone Nulls}
\label{sec:monotone}

The preceding sections study a fixed pair of saturation levels.
This is the right object when the research question concerns a particular contrast.
In other applications, the substantive question concerns the shape of the spillover response across several saturation levels.
For example, a researcher may want to test whether untreated outcomes are monotone in treatment saturation over a specified range of saturation levels.
This section develops a finite-sample valid unconditional randomization test for such monotone spillover nulls following \citet{zhong2024unconditional}.

\subsection{Monotone nulls over an ordered set of saturation levels}
\label{subsec:monotone-null}

Let
\[
\calS_M=\{q_0,q_1,\ldots,q_M\}\subseteq \calS_0,
\qquad
q_0<q_1<\cdots<q_M,
\]
be the ordered set of feasible untreated saturation levels over which the researcher wants to test monotonicity.
The case \(\calS_M=\calS_0\) corresponds to the global monotone null over the full untreated-support set.
Allowing \(\calS_M\) to be a strict subset is useful when the substantive question concerns monotonicity only over a particular range of saturation levels.

We focus on the monotone increasing spillover null for untreated units,
\begin{equation}\label{eq:null-monotone}
H_{0,\mathrm M}(\calS_M):
\qquad
Y_i(0,q_0)\le Y_i(0,q_1)\le\cdots\le Y_i(0,q_M)
\quad
\text{for all } i=1,\ldots,N.
\end{equation}
The direction can be reversed if the substantive theory predicts that higher saturation weakly lowers untreated outcomes.
Since the experiment randomizes over a finite set of saturation labels, \eqref{eq:null-monotone} is the operational design-based monotonicity restriction.
The null is unitwise: it is stronger than monotonicity of the average spillover response.

One approach would be to test the adjacent inequalities
\[
Y_i(0,q_\ell)\le Y_i(0,q_{\ell+1}),
\qquad \ell=0,\ldots,M-1,
\]
and then combine the resulting \(p\)-values.
This approach is valid if the multiple-testing step is handled appropriately, but it treats monotonicity as a collection of local statements.
We instead construct a single unconditional randomization test for the joint null in \eqref{eq:null-monotone}.

\subsection{Pairwise-imputable monotone statistics}
\label{subsec:monotone-statistics}

The test is based on pairwise comparisons between two assignments.
For any assignments \(z=(a,d)\) and \(z'=(a',d')\) in \(\calZ\), define
\begin{equation}\label{eq:pairwise-imputable-set}
\mathbb I_M(z,z')
=
\left\{
 i\in[N]:
 d_i=d_i'=0
 \text{ and }
 a_{[i]},a'_{[i]}\in\calS_M
\right\}.
\end{equation}
These are the units that are untreated under both assignments and whose two saturation exposures are both covered by the monotone null.
If \(\calS_M=\calS_0\), this set contains all doubly untreated units whose two untreated exposures are feasible.
If \(\calS_M\subsetneq\calS_0\), units whose pairwise exposures fall outside the tested saturation range are excluded because the null imposes no restriction on their pairwise outcome ordering.

\begin{definition}[Pairwise-imputable spillover-monotone statistic]
\label{def:pairwise-monotone-stat}
A statistic
\[
T:\mathbb R^N\times\calZ\times\calZ\to\mathbb R\cup\{\infty\}
\]
is a pairwise-imputable spillover-monotone statistic for \(H_{0,\mathrm M}(\calS_M)\) if, for every pair \(z=(a,d)\) and \(z'=(a',d')\), the following two conditions hold.
\begin{enumerate}[(i)]
\item \textit{Pairwise imputability.}
If two outcome vectors \(y\) and \(y'\) agree on \(\mathbb I_M(z,z')\), then
\[
T(y,z,z')=T(y',z,z').
\]

\item \textit{Monotonicity.}
If, for every \(i\in\mathbb I_M(z,z')\),
\[
\begin{cases}
y_i\ge y_i', & \text{when } a_{[i]}>a'_{[i]},\\
y_i\le y_i', & \text{when } a_{[i]}<a'_{[i]},\\
y_i= y_i', & \text{when } a_{[i]}=a'_{[i]},
\end{cases}
\]
then
\[
T(y,z,z')\ge T(y',z,z').
\]
\end{enumerate}
\end{definition}

Condition (i) says that the statistic uses only units whose two untreated exposures are covered by the null.
Condition (ii) says that the statistic respects the direction of the monotone restriction.
When the saturation exposure is higher under \(z\) than under \(z'\), the corresponding outcome is weakly higher; when the saturation exposure is lower under \(z\), the corresponding outcome is weakly lower; and when the exposure is unchanged, the outcome is unchanged.
The last case is natural because equal exposure labels correspond to the same untreated exposure.
For a monotone decreasing null, the inequalities in Condition (ii) are reversed.

A simple example is a transformed difference in means.
Let \(\psi_1,\psi_0:\mathbb R\to\mathbb R\) be nondecreasing functions.
\begin{equation}\label{eq:transformed-diffmean}
T^{\mathrm{dm}}(y,z,z')
=
\frac{
\sum_{i\in\mathbb I_M(z,z')}
\1\{a_{[i]}\ge a'_{[i]}\}\psi_1(y_i)
}{
\sum_{i\in\mathbb I_M(z,z')}
\1\{a_{[i]}\ge a'_{[i]}\}
}
-
\frac{
\sum_{i\in\mathbb I_M(z,z')}
\1\{a_{[i]}<a'_{[i]}\}\psi_0(y_i)
}{
\sum_{i\in\mathbb I_M(z,z')}
\1\{a_{[i]}<a'_{[i]}\}
}
\end{equation}
with the convention \(0/0=0\). When \(\psi_1(u)=\psi_0(u)=u\), this statistic compares outcomes for units whose saturation is weakly higher under \(z\) than under \(z'\) with outcomes for units whose saturation is lower under \(z\) than under \(z'\). One may alternatively omit unchanged-exposure units by replacing \(H(z,z')\) with \(\{i\in\mathbb I_M(z,z'):a_{[i]}>a'_{[i]}\}\).

A rank-based statistic can be defined similarly.
Let \(r_i(y_{\mathbb I_M(z,z')})\) be the rank of \(y_i\) among
\(\{y_\ell:\ell\in\mathbb I_M(z,z')\}\), with ties handled by average ranks.
For a nondecreasing score function \(\varphi\), define
\begin{equation}\label{eq:rank-statistic}
T^{\mathrm{rk}}(y,z,z')
=
\sum_{i\in\mathbb I_M(z,z')}
\1\{a_{[i]}\ge a'_{[i]}\}
\varphi\!\left(r_i(y_{\mathbb I_M(z,z')})\right).
\end{equation}
The transformed difference-in-means statistic satisfies Definition~\ref{def:pairwise-monotone-stat} directly by the monotonicity of \(\psi_1\) and \(\psi_0\).
The rank statistic also satisfies the definition: increasing outcomes in the weakly higher group and decreasing outcomes in the lower group weakly increases the ranks, and hence the rank scores, assigned to the weakly higher group.
Thus both \(T^{\mathrm{dm}}\) and \(T^{\mathrm{rk}}\) are pairwise-imputable spillover-monotone statistics.

The definition implies the following pairwise ordering.

\begin{proposition}[Pairwise ordering under monotonicity]
\label{prop:pairwise-ordering}
Suppose Assumption~\ref{ass:hpi} holds and the monotone null \(H_{0,\mathrm M}(\calS_M)\) in \eqref{eq:null-monotone} is true.
If \(T\) is a pairwise-imputable spillover-monotone statistic, then, for every \(z,z'\in\calZ\),
\[
T\{Y(z),z,z'\}
\ge
T\{Y(z'),z,z'\}.
\]
\end{proposition}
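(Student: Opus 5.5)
The plan is to apply condition (ii) of Definition~\ref{def:pairwise-monotone-stat} with \(y=Y(z)\) and \(y'=Y(z')\). Condition (i) ensures that only coordinates in \(\mathbb I_M(z,z')\) matter, so it suffices to check the three-case ordering on that set. Because \(T\) may depend on coordinates only through \(\mathbb I_M(z,z')\), we do not need to control \(Y_i(z)\) and \(Y_i(z')\) for units outside it: treated units and units whose labels fall outside \(\calS_M\) are irrelevant.

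Fix \(z=(a,d)\), \(z'=(a',d')\in\calZ\) and a unit \(i\in\mathbb I_M(z,z')\). We will carry out the following steps.

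\begin{enumerate}[(1)]
\item Since \(d_i=d_i'=0\), we have \(m_{[i]}(a_{[i]})<n_{[i]}\). Moreover, \(a_{[i]},a'_{[i]}\in\calS_M\subseteq\calS_0\), so the exposures \((0,a_{[i]})\) and \((0,a'_{[i]})\) both lie in \(\calE_{[i]}\).
\item By Assumption~\ref{ass:hpi}, together with the fact that under the design \(\sum_{k\in\calI_{[i]}}d_k=m_{[i]}(a_{[i]})\), we get \(Y_i(z)=Y_i(0,a_{[i]})\) and \(Y_i(z')=Y_i(0,a'_{[i]})\). This is the shorthand introduced in Section~\ref{subsec:potential-outcomes}, so the step is just bookkeeping.
\item Split into three cases according to the relative order of \(a_{[i]}\) and \(a'_{[i]}\), both of which lie in the ordered set \(\calS_M\).
\begin{itemize}
\item If \(a_{[i]}>a'_{[i]}\), then \(H_{0,\mathrm M}(\calS_M)\) gives \(Y_i(0,a_{[i]})\ge Y_i(0,a'_{[i]})\).
\item If \(a_{[i]}<a'_{[i]}\), it gives the reverse inequality.
\item If \(a_{[i]}=a'_{[i]}\), the two untreated exposures coincide, so \(Y_i(z)=Y_i(z')\) by step (2). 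This uses Assumption~\ref{ass:hpi} and does not need the null.
\end{itemize}
\item The hypothesis of condition (ii) of Definition~\ref{def:pairwise-monotone-stat} therefore holds on \(\mathbb I_M(z,z')\) for \(y=Y(z)\), \(y'=Y(z')\), which yields \(T\{Y(z),z,z'\}\ge T\{Y(z'),z,z'\}\).
\end{enumerate}

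There is no genuine obstacle. The only point needing care is step (2) in the equal-label case, where equality of outcomes follows because the number of treated units in the cluster is fixed by the label. This is where the design's deterministic rule \(m_j(\cdot)\) is used.
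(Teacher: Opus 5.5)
Your proposal is correct and matches the paper's proof: you reduce \(Y_i(z)\) and \(Y_i(z')\) to the untreated exposures \((0,a_{[i]})\) and \((0,a'_{[i]})\) via Assumption~\ref{ass:hpi}, split into the three label cases using the monotone null (and Assumption~\ref{ass:hpi} alone when the labels are equal), and then invoke condition (ii) of Definition~\ref{def:pairwise-monotone-stat}. Your appeal to condition (i) is harmless but unnecessary, because the hypothesis of condition (ii) already only needs to be checked on \(\mathbb I_M(z,z')\).
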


\subsection{Unconditional PIRT for the monotone null}
\label{subsec:pirt}

Proposition~\ref{prop:pairwise-ordering} leads to an unconditional randomization test following \citet{zhong2024unconditional}.
For the realized assignment \(Z^{\mathrm{obs}}\), define
\begin{equation}\label{eq:pirt-pvalue}
p_M^{\mathrm{PIRT}}(Z^{\mathrm{obs}})
=
\sum_{z\in\calZ}
\1\!\bigl[
T\{Y^{\mathrm{obs}},Z^{\mathrm{obs}},z\}
\ge
T\{Y^{\mathrm{obs}},z,Z^{\mathrm{obs}}\}
\bigr]
P(z).
\end{equation}
Every term in \eqref{eq:pirt-pvalue} is observable because the statistic is pairwise imputable: for each reference assignment \(z\), both statistics depend only on units in \(\mathbb I_M(Z^{\mathrm{obs}},z)\).
When \(|\calZ|\) is too large for enumeration, the sum in \eqref{eq:pirt-pvalue} can be approximated by Monte Carlo draws from the known design distribution \(P\).
Only exact enumeration, or a conservative finite-Monte-Carlo implementation, should be described as finite-sample valid without numerical qualification.

\begin{testprocedure}[PIRT for the monotone spillover null]
\label{proc:pirt}
Fix an ordered saturation set \(\calS_M\subseteq\calS_0\) and a pairwise-imputable spillover-monotone statistic \(T\).
\begin{enumerate}[(1)]
\item Draw or enumerate reference assignments \(z\) from the design distribution \(P\).
\item For each reference assignment, compute
\[
\Gamma(z;Z^{\mathrm{obs}},Y^{\mathrm{obs}})
=
\1\!\bigl[
T\{Y^{\mathrm{obs}},Z^{\mathrm{obs}},z\}
\ge
T\{Y^{\mathrm{obs}},z,Z^{\mathrm{obs}}\}
\bigr].
\]
\item Compute
\[
p_M^{\mathrm{PIRT}}(Z^{\mathrm{obs}})
=
\sum_{z\in\calZ}
\Gamma(z;Z^{\mathrm{obs}},Y^{\mathrm{obs}})P(z),
\]
or a Monte Carlo analogue.
\item Reject \(H_{0,\mathrm M}(\calS_M)\) at level \(\alpha\) when
\[
p_M^{\mathrm{PIRT}}(Z^{\mathrm{obs}})\le \alpha/2.
\]
Equivalently, \(\min\{2p_M^{\mathrm{PIRT}},1\}\) is the reported finite-sample valid \(p\)-value under exact enumeration.
\end{enumerate}
\end{testprocedure}

\begin{theorem}[Finite-sample validity of monotone PIRT]
\label{thm:pirt-valid}
Suppose Assumption~\ref{ass:hpi} holds and the monotone null \(H_{0,\mathrm M}(\calS_M)\) in \eqref{eq:null-monotone} is true.
If \(T\) is a pairwise-imputable spillover-monotone statistic, then the rejection rule in Procedure~\ref{proc:pirt} satisfies
\[
\E_P\!
\left(
\1\{p_M^{\mathrm{PIRT}}(Z^{\mathrm{obs}})\le \alpha/2\}
\right)
\le \alpha
\qquad
\text{for every } \alpha\in(0,1).
\]
Thus Procedure~\ref{proc:pirt} is a finite-sample valid unconditional randomization test of the monotone spillover null over \(\calS_M\).
\end{theorem}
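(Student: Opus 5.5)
The plan is to follow the argument of \citet{zhong2024unconditional}: compare the observed \(p\)-value with an oracle \(p\)-value built from the true potential outcome schedule, then use an exchangeability (symmetrization) argument over pairs of assignments drawn independently from \(P\).

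\textbf{Step 1 (domination by an oracle).} For \(z,z'\in\calZ\) define
\[
\phi(z,z')=\1\bigl[T\{Y(z),z,z'\}\ge T\{Y(z'),z',z\}\bigr],
\]
where \(Y(z)\) is the full outcome vector under \(z\). Here \(T\{Y^{\mathrm{obs}},Z^{\mathrm{obs}},z\}=T\{Y(Z^{\mathrm{obs}}),Z^{\mathrm{obs}},z\}\) exactly. For the other term, I need to compare \(T\{Y^{\mathrm{obs}},z,Z^{\mathrm{obs}}\}\) with \(T\{Y(z),z,Z^{\mathrm{obs}}\}\). Apply Proposition~\ref{prop:pairwise-ordering} with the roles \((z,Z^{\mathrm{obs}})\). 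Under Assumption~\ref{ass:hpi} and the monotone null, units in \(\mathbb I_M(z,Z^{\mathrm{obs}})\) satisfy \(Y_i(z)\ge Y_i(Z^{\mathrm{obs}})\) when \(a_{[i]}>A^{\mathrm{obs}}_{[i]}\), and so on, and pairwise imputability makes the units outside that set irrelevant. This gives
\[
T\{Y(z),z,Z^{\mathrm{obs}}\}\ge T\{Y^{\mathrm{obs}},z,Z^{\mathrm{obs}}\}.
\]
Hence the indicator in \eqref{eq:pirt-pvalue} is at least
\[
\1\bigl[T\{Y(Z^{\mathrm{obs}}),Z^{\mathrm{obs}},z\}\ge T\{Y(z),z,Z^{\mathrm{obs}}\}\bigr]=\phi(Z^{\mathrm{obs}},z).
\]
Therefore \(p_M^{\mathrm{PIRT}}\ge \tilde p(Z^{\mathrm{obs}}):=\sum_z\phi(Z^{\mathrm{obs}},z)P(z)\), and it suffices to show \(\Prob\{\tilde p(Z)\le\alpha/2\}\le\alpha\).

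\textbf{Step 2 (symmetrization).} Let \(Z,Z'\) be i.i.d.\ from \(P\), and write \(G(z,z')=T\{Y(z),z,z'\}-T\{Y(z'),z',z\}\). Then \(G(z',z)=-G(z,z')\), so \(\phi(z,z')+\phi(z',z)\ge 1\) for every pair. The oracle quantity is \(\tilde p(z)=\Prob_{Z'}\{G(z,Z')\ge0\}\), and it is a deterministic function of \(z\) because potential outcomes are fixed.

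\textbf{Step 3 (the bound).} Let \(B=\{z:\tilde p(z)\le\alpha/2\}\). Then
\[
\Prob(Z\in B)^2=\Prob(Z\in B,\,Z'\in B).
\]
For \(Z,Z'\in B\), at least one of \(\phi(Z,Z')\), \(\phi(Z',Z)\) equals one, so
\[
\Prob(Z\in B,Z'\in B)\le \E[\1\{Z\in B\}\phi(Z,Z')]+\E[\1\{Z'\in B\}\phi(Z',Z)]=2\E[\1\{Z\in B\}\tilde p(Z)]\le \alpha\,\Prob(Z\in B).
\]
This yields \(\Prob(Z\in B)\le\alpha\), which is the claim.

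The main obstacle is Step 1: getting the direction of the inequality right when Proposition~\ref{prop:pairwise-ordering} is applied with the argument order \((z,Z^{\mathrm{obs}})\), and checking that pairwise imputability lets \(Y^{\mathrm{obs}}\) substitute for the appropriate true vector on \(\mathbb I_M\). The units with equal exposure must have equal outcomes, which follows from Assumption~\ref{ass:hpi}. Steps 2 and 3 are a standard factor-of-two symmetrization argument.
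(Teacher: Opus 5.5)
Your proposal is correct and follows essentially the same argument as the paper. The paper also dominates the PIRT value by \(\sum_{\tilde z}H(z,\tilde z)\pi(\tilde z)\), where \(H\) is your \(\phi\), via Proposition~\ref{prop:pairwise-ordering} applied to the pair \((\tilde z,z)\); it then bounds \(\sum_{z\in\calZ_\alpha}\sum_{\tilde z}H(z,\tilde z)\pi(z)\pi(\tilde z)\) above by \(\alpha w_\alpha/2\) and below by \(w_\alpha^2/2\) using \(H(z,\tilde z)+H(\tilde z,z)\ge 1\), which is your Step 3 written as a double sum rather than with independent copies (the paper additionally keeps the diagonal term \(H(z,z)=1\) to get a strict inequality, which the stated bound does not need).
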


The factor \(\alpha/2\) is the price of converting the pairwise ordering in Proposition~\ref{prop:pairwise-ordering} into an unconditional finite-sample test.
The resulting test is conservative in general.
This conservativeness should be weighed against the benefit of testing the monotone null directly over the chosen saturation set, without decomposing it into adjacent contrasts and combining multiple \(p\)-values.

\begin{remark}[Full-support and subset monotonicity]\label{rem:subset-monotonicity}
When \(\calS_M=\calS_0\), the null in \eqref{eq:null-monotone} rules out nonmonotonicity over the full feasible untreated-support set used by the experiment.
When \(\calS_M\subsetneq\calS_0\), the null is weaker and concerns only the specified saturation range.
The PIRT remains valid because the pairwise imputable set in \eqref{eq:pairwise-imputable-set} excludes units whose two exposures are not both covered by the null.
The choice of \(\calS_M\) is therefore part of the null hypothesis: a smaller set can sharpen the substantive target but may reduce the number of units contributing to each pairwise comparison.
\end{remark}

\begin{remark}[General ordered-exposure nulls]\label{rem:general-ordered-exposure}
The PIRT construction does not rely on any special feature of the randomized saturation design beyond the exposure mapping and the known assignment distribution.
More generally, suppose an experiment has assignment space \(\calZ\), assignment law \(P\), and exposure mapping \(e_i:\calZ\to\mathcal E\).
Let \(\mathcal E_0\subseteq\mathcal E\) be the subset of exposures on which the null is stated, and let \(\preceq\) be a partial order on \(\mathcal E_0\).
Consider the ordered-exposure null
\[
Y_i(e)\le Y_i(e')
\quad\text{for all }i\text{ and all }e,e'\in\mathcal E_0\text{ with }e\preceq e'.
\]
For a pair \(z,z'\), the pairwise imputable set is
\[
\mathbb I_{\preceq}(z,z')
=
\left\{
 i\in[N]:
 e_i(z),e_i(z')\in\mathcal E_0
 \text{ and }
 \bigl(e_i(z)\preceq e_i(z')\text{ or }e_i(z')\preceq e_i(z)\bigr)
\right\}.
\]
Units whose two exposures are not both covered by the null, or whose exposures are incomparable, are excluded from the pairwise comparison.
Any statistic satisfying the analogues of Definition~\ref{def:pairwise-monotone-stat} with respect to \(\mathbb I_{\preceq}(z,z')\) yields the same PIRT validity argument.
The saturation-design null in \eqref{eq:null-monotone} is the special case \(e_i(z)=(D_i,A_{[i]})\), 
\(
\mathcal E_0=\{(0,q):q\in\calS_M\},
\)
and
\(
(0,q)\preceq (0,q')
\Longleftrightarrow
q\le q'.
\)
\end{remark}

\section{Empirical Application: Spillovers in the Zomba Cash Transfer Experiment}
\label{sec:application_malawi}

We conclude with an application to the Zomba Cash Transfer Program in Malawi.
The experiment is useful for illustration because it combines cluster-level
variation in treatment intensity with individual-level treatment assignment.
\citet{baird2011cash} and \citet{baird2018optimal} use this design to study
direct and spillover effects of cash transfers.  Our purpose is narrower: we use
the public randomized saturation structure to illustrate how the finite-sample
randomization tests developed above can be implemented in a realistic
multi-saturation experiment.

\subsection{Design, exposure labels, and hypotheses}
\label{subsec:malawi_design_hypotheses}

The original experiment assigned enumeration areas (EAs) to cash-transfer
intervention status and then varied schoolgirl offers within treatment EAs.  We
focus on the unconditional cash-transfer (UCT) side of the schoolgirl
intervention, together with treatment EAs in which no baseline schoolgirls were
offered transfers.  This UCT-plus-zero sample keeps the analysis within one
schoolgirl intervention arm while retaining a zero schoolgirl-offer reference
cell.  Table~\ref{tab:malawi_design_cells} summarizes the resulting support.
The \(100\%\) cell remains part of the assignment support but contributes no
untreated focal schoolgirls to the spillover contrasts.

\begin{table}[h]
\centering
\begin{threeparttable}
\caption{UCT-plus-zero application sample}
\label{tab:malawi_design_cells}
\begin{tabular}{@{}llrrr@{}}
\toprule
Label & Schoolgirl policy & EAs & Offered & Untreated \\
\midrule
\(\ell_0\)      & No schoolgirl offer & 15 & 0   & 201 \\
\(\ell_{0.33}\)& UCT, \(33\%\)       & 9  & 68  & 135 \\
\(\ell_{0.66}\)& UCT, \(66\%\)       & 9  & 87  & 44  \\
\(\ell_1\)     & UCT, \(100\%\)      & 9  & 130 & 0   \\
\bottomrule
\end{tabular}
\begin{tablenotes}
\footnotesize
\item Notes: Counts are computed from the public Round 3 analysis file.  The
zero cell is not a pure control group; it is a treatment-EA cell with no
baseline schoolgirl offers.
\end{tablenotes}
\end{threeparttable}
\end{table}

The randomization tests condition on the observed set of 42 EAs in
Table~\ref{tab:malawi_design_cells} and on the fixed compound-cell margins
\(15,9,9,9\).  Under the published complete-randomization description, the
schoolgirl-offer labels $\{\ell_0, \ell_{0.33}, \ell_{0.66}, \ell_1\}$
are completely randomized across these EAs subject to those margins.  The local
CRTs further condition on the selected untreated focal sets and relabel the
relevant EA-level policy labels within the corresponding fixed margins.  Thus
the empirical relabelings should be read as relabelings of compound
schoolgirl-offer policies, not as separate randomizations of the background
dropout intervention or transfer amounts.

The exposure notation below makes this compound-label interpretation explicit.
Let \(\ell_0\) denote the no-schoolgirl-offer treatment-EA label, and let
\(\ell_s\) denote the UCT schoolgirl-offer label at saturation
\(s\in\{0.33,0.66,1\}\).  We write \(Y_i(d,\ell_s)\) for the potential outcome
of baseline schoolgirl \(i\) under own schoolgirl offer status \(d\) and
compound schoolgirl-offer label \(\ell_s\).  This is a reduced exposure mapping:
spillovers are assumed to depend on the randomized schoolgirl-offer policy
label, with other components absorbed into or conditioned on by that label.

For each outcome, let \(\calI_{\mathrm{app}}\) denote the baseline schoolgirls
in the UCT-plus-zero EAs with that Round 3 outcome observed in the public
analysis file.\footnote{The application is therefore a design-based analysis of
the outcome-specific public-data finite population.  The public files contain
the assignment labels, EA identifiers, schoolgirl offer indicators, and Round 3
outcomes needed for this illustration; components used for other purposes in
the published studies are not required for the randomization tests reported
here.}  For \(s>s'\), we test the unit-level bounded null
\[
H_{0,B}^{s,s'}:\qquad
Y_i(0,\ell_s)\le Y_i(0,\ell_{s'})
\quad\text{for every }i\in\calI_{\mathrm{app}}.
\]
The three pairwise contrasts are
\((s,s')=(0.33,0)\), \((0.66,0.33)\), and \((0.66,0)\).  On the raw outcome
scale, a small pairwise \(p\)-value is evidence that the higher compound label
raises the untreated outcome for at least one schoolgirl in the finite
population.  We also test the increasing monotone null
\[
Y_i(0,\ell_0)\le Y_i(0,\ell_{0.33})\le Y_i(0,\ell_{0.66})
\quad\text{for every }i\in\calI_{\mathrm{app}}.
\]
For this null, a small monotone \(p\)-value is evidence against weakly increasing
raw outcomes over the ordered labels.  The four outcomes are current
enrollment, English literacy, ever married, and ever pregnant.

The pairwise bounded-null tests use a focal-count weighted version of the
least-favorable difference-in-means statistic from
Section~\ref{subsec:test-statistics}.  This statistic has the same
least-favorable monotonicity property as \(T_{U,\delta}^{\mathrm B}\), so it is
covered by Theorem~\ref{thm:two-level-exact}.  For the global monotone PIRT, we use the strict changed-exposure variant of the
transformed difference-in-means statistic in \eqref{eq:transformed-diffmean},
applied to \(-Y_i\) rather than \(Y_i\).  This sign reversal makes large values
correspond to violations in which a higher ordered label lowers the raw outcome.  In the
application code, both statistics are computed as focal-count weighted
cluster-mean coefficients.  This regression-style implementation is only a
convenient way to compute the relevant mean contrasts; the validity of the tests
comes from their respective randomization reference distributions.

\subsection{Calibrated simulations}
\label{subsec:malawi_simulation}

We use calibrated simulations to benchmark the candidate procedures in a design
close to the application.  The simulations are semi-synthetic: they fix the
observed EA sizes, saturation-cell counts, and outcome-specific untreated
baseline risks, and vary only the untreated spillover response surface.  For
outcome \(k\) in EA \(j\), the baseline risk is the smoothed untreated mean
\[
\widehat p_{jk}
=
\frac{\sum_{i:j(i)=j,D_i=0}Y_i^k+\lambda\bar Y_0^k}{n_{j0}+\lambda},
\qquad
\lambda=4,
\]
where \(\bar Y_0^k\) is the overall untreated mean and \(n_{j0}\) is the number
of untreated schoolgirls in EA \(j\).  We add an assignment-independent
heavy-tailed EA component
\[
\eta_j
=
0.15\sqrt{n_j/\operatorname{median}(n_j)}\,t_{2j}/\sqrt{2},
\]
where \(t_{2j}\) are independent Student-\(t\) draws with two degrees of
freedom.  The simulated untreated potential outcomes are generated as
\[
Y_i^k(0,a;\tau)
=
\1\!\left[
U_i^0
\le
\operatorname{clip}_{[\epsilon,1-\epsilon]}
\left\{
\widehat p_{j(i)k}+\eta_{j(i)}
+\sigma\tau\frac{\min(a,0.66)}{0.66}
\right\}
\right],
\qquad
U_i^0\sim \operatorname{Unif}(0,1),
\]
with \(\epsilon=0.001\), \(\tau\in\{0,0.10,0.20,0.30,0.40,0.50\}\), and
\(\sigma=1\) for the pairwise bounded-null simulations.  For the monotone-null
simulations, we set \(\sigma=-1\), so positive \(\tau\) generates violations of
weakly increasing untreated outcomes.  The smoothing, clipping, and latent EA
components are used only to construct the semi-synthetic designs; the
randomization tests applied to the observed data use the observed outcomes and
the known assignment mechanism.

Figure~\ref{fig:pairwise-power} compares three procedures for the pairwise
bounded spillover nulls: a unit-level linear probability model with
EA-clustered CR1 standard errors, the focal-set conditional CRT, and PIRT.  The
regression benchmark has the largest rejection rates under alternatives but is
oversized under the zero-spillover design, with average size \(0.078\) and
maximum size \(0.120\) across outcome-contrast cells.  The CRT is much closer to
nominal size, with average size \(0.051\) and maximum size \(0.077\).  PIRT is
more conservative, with average size \(0.009\) and maximum size \(0.020\).  Over
the positive effect grid, the average rejection rates are \(0.397\), \(0.315\),
and \(0.180\) for the regression benchmark, CRT, and PIRT, respectively.  We
therefore emphasize the conditional CRT for the pairwise bounded-null
application.

\begin{figure}[t]
\centering
\includegraphics[width=\textwidth]{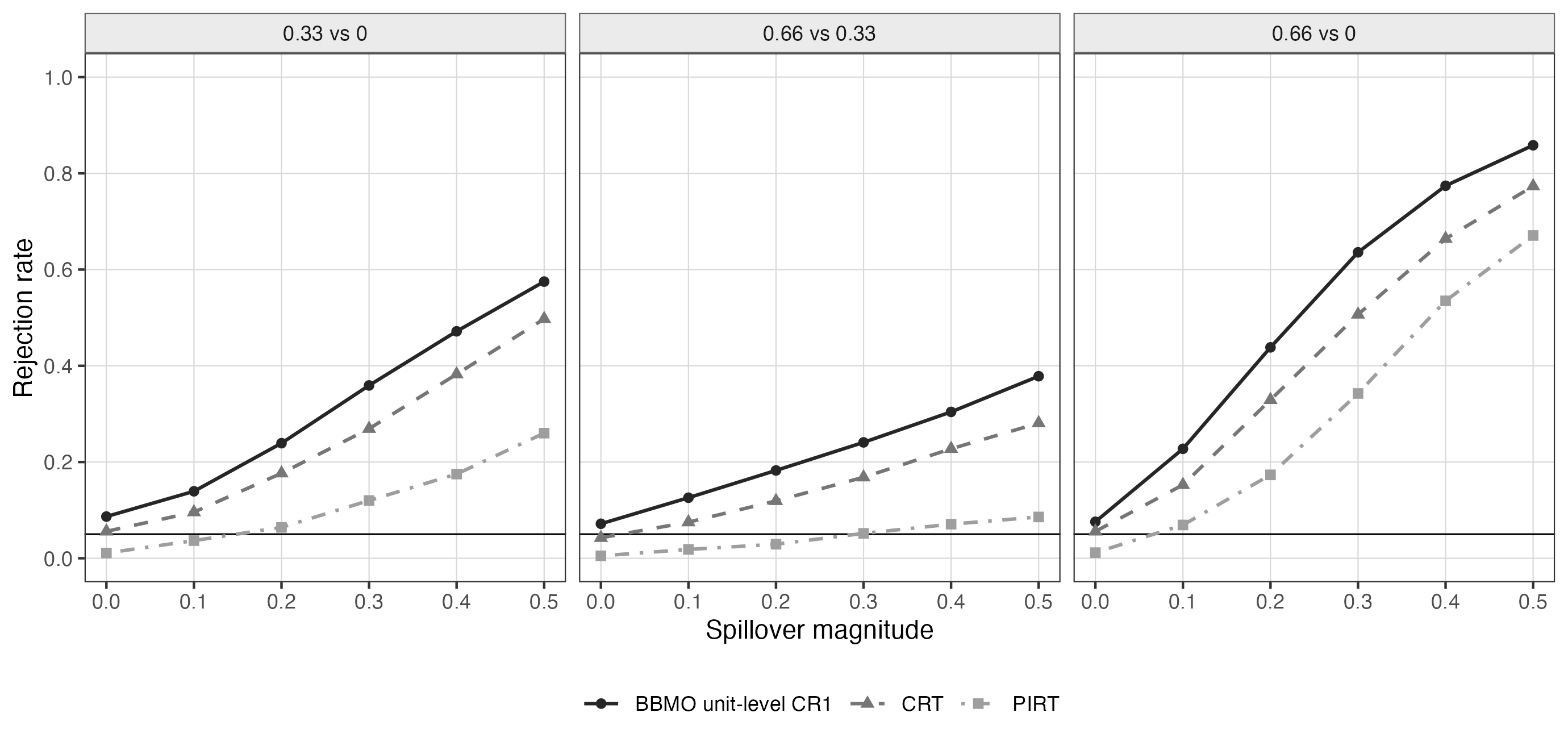}
\caption{Calibrated power for pairwise bounded spillover nulls.  The horizontal
line marks the nominal \(5\%\) level.  Each panel corresponds to one saturation
contrast.  Rejection rates are averaged over the four Round 3 outcomes.}
\label{fig:pairwise-power}
\end{figure}

Figure~\ref{fig:monotone-power} compares the global monotone PIRT with adjacent
CRTs combined by Bonferroni.  In the empirical three-level support
\(\{0,0.33,0.66\}\), adjacent CRT plus Bonferroni is stronger near the null:
at \(\tau=0.10\) and \(\tau=0.20\), its rejection rates are \(0.163\) and
\(0.400\), compared with \(0.090\) and \(0.367\) for PIRT.  PIRT overtakes the
adjacent CRT procedure for larger violations.  In a four-level stress support
\(\{0,0.22,0.44,0.66\}\), which keeps the same calibrated outcome structure but
uses a richer ordered support, PIRT has higher average power over the positive
effect grid: \(0.559\), compared with \(0.403\) for adjacent CRT plus
Bonferroni.

These simulations describe operating characteristics; they do not determine the
validity of the observed-data tests.  For the observed pairwise bounded-null
application, we emphasize the focal-set conditional CRT.  For the observed
monotone application, we report both the global monotone PIRT and the adjacent
CRT-Bonferroni procedure.  The latter uses the same local CRT building blocks as
the pairwise analysis, but it tests a different directional null: small pairwise
\(p\)-values support increases in raw outcomes at higher labels, whereas small
monotone \(p\)-values reject weakly increasing raw outcomes.

\begin{figure}[t]
\centering
\includegraphics[width=\textwidth]{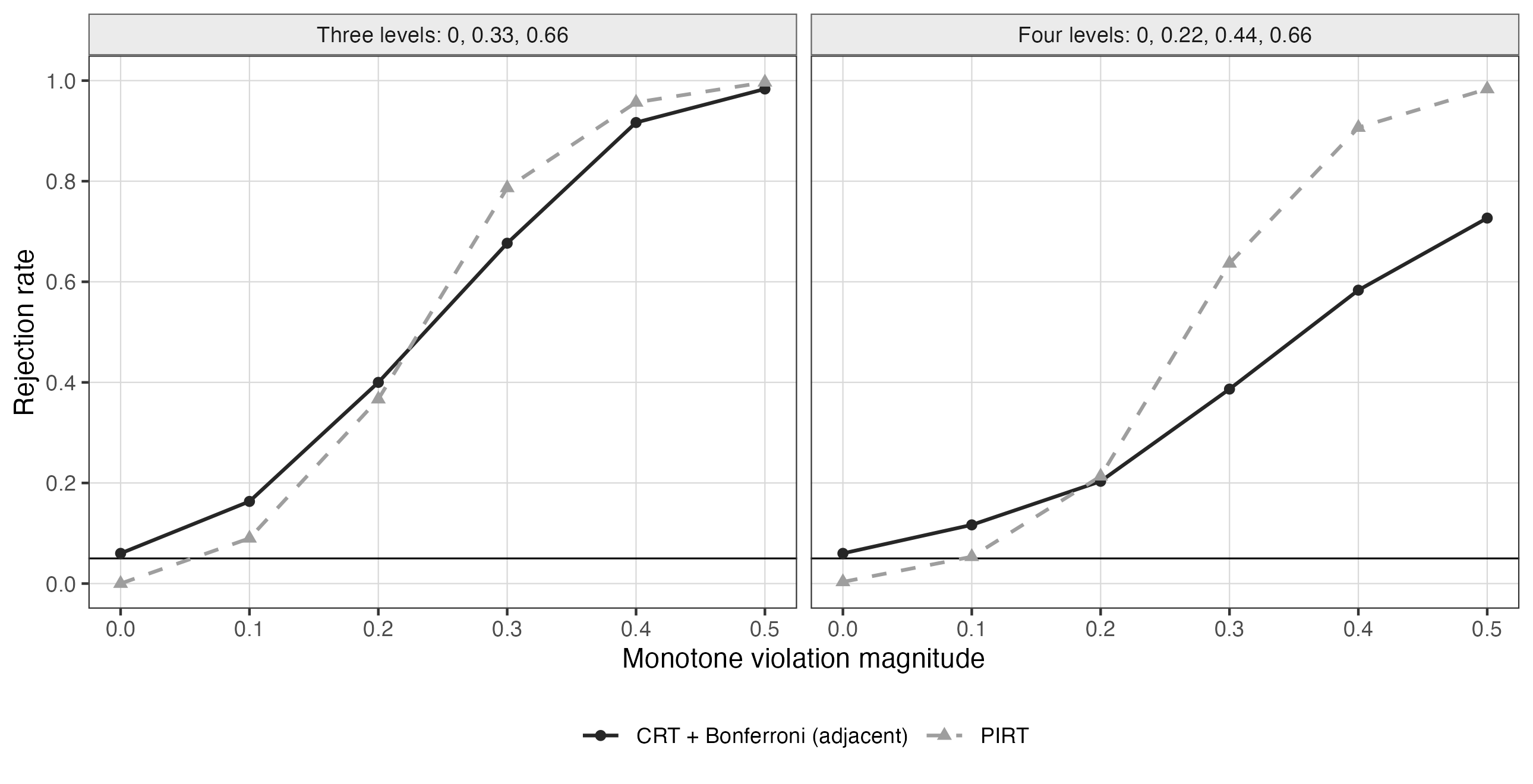}
\caption{Calibrated power for monotone spillover nulls.  The left panel uses
the three-level support in the empirical application; the right panel uses a
four-level stress support.  The horizontal line marks the nominal \(5\%\) level.}
\label{fig:monotone-power}
\end{figure}

\subsection{Application to observed outcomes}
\label{subsec:malawi_results}

Table~\ref{tab:malawi_application_pvalues} reports the observed-data
randomization \(p\)-values.  The first three columns report one-sided focal-set
CRT \(p\)-values for the pairwise bounded nulls.  The last two columns report
valid \(p\)-values for the raw-scale monotone null over
\(\{\ell_0,\ell_{0.33},\ell_{0.66}\}\): the adjusted monotone PIRT value
\(\min\{2p_M^{\mathrm{PIRT}},1\}\) and the adjacent-CRT Bonferroni value.

\begin{table}[t]
\centering
\small
\begin{threeparttable}
\caption{Randomization \(p\)-values in the UCT-plus-zero application}
\label{tab:malawi_application_pvalues}
\begin{tabular*}{\textwidth}{@{\extracolsep{\fill}}lccccc@{}}
\toprule
Outcome
& \(\ell_{0.33}\) vs. \(\ell_0\)
& \(\ell_{0.66}\) vs. \(\ell_{0.33}\)
& \(\ell_{0.66}\) vs. \(\ell_0\)
& PIRT
& Adj. CRT \\
\midrule
Currently enrolled & 0.855 & 0.116 & 0.545 & 0.782 & 0.102 \\
English literacy   & 0.554 & 0.470 & 0.659 & 0.820 & 1.000 \\
Ever married       & 0.064 & 0.749 & 0.943 & 0.788 & 0.669 \\
Ever pregnant      & 0.112 & 0.832 & 0.471 & 0.936 & 0.348 \\
\bottomrule
\end{tabular*}
\begin{tablenotes}
\footnotesize
\item Notes: The first three columns report focal-set CRT \(p\)-values for the
unit-level bounded null \(Y_i(0,\ell_s)\le Y_i(0,\ell_{s'})\) for every
schoolgirl in the application finite population.  Small pairwise \(p\)-values
are evidence that the higher compound label raises the untreated raw outcome.
The last two columns concern the monotone null
\(Y_i(0,\ell_0)\le Y_i(0,\ell_{0.33})\le Y_i(0,\ell_{0.66})\).  PIRT reports
\(\min\{2p_M^{\mathrm{PIRT}},1\}\); Adj. CRT reports the Bonferroni-adjusted
\(p\)-value from adjacent local CRTs.  Small monotone \(p\)-values are evidence
against weakly increasing raw outcomes over the ordered compound labels.
\end{tablenotes}
\end{threeparttable}
\end{table}

None of the reported randomization tests rejects at the \(5\%\) level.  The
smallest pairwise \(p\)-value is \(0.064\), for the
\(\ell_{0.33}\) versus \(\ell_0\) contrast on ever married.  The smallest
monotone \(p\)-value is \(0.102\), from the adjacent-CRT procedure for current
enrollment.  Thus the pairwise CRTs do not reject the unit-level non-increase
bounds in favor of higher untreated outcomes at higher compound labels.  The
monotone PIRT and adjacent-CRT Bonferroni procedures also do not reject the
weakly increasing raw-outcome monotonicity null.  These nonrejections should not
be read as two-sided evidence of no spillover effects; they are evidence only
with respect to the specified one-sided bounded and monotone nulls.  The
exercise illustrates the practical value of exact randomization procedures in a
design with few clusters per saturation cell, heterogeneous cluster sizes, and
partially sharp spillover nulls.

\section{Conclusion}\label{sec:conclusion}

This paper develops a randomization-based toolkit for randomized saturation designs.
The main message is that the conditioning step and the statistic should be separated.
For a fixed pair of saturation levels, conditioning on untreated focal units turns the spillover comparison into a cluster-level relabeling problem.
That relabeling distribution yields finite-sample validity for partially sharp nulls, asymptotic validity for weak nulls when paired with a studentized statistic, and finite-sample validity for bounded nulls when paired with a least favorable shifted statistic.
For multiple saturation levels, a pairwise-imputation approach yields a finite-sample valid unconditional test of the global monotone null.

Several extensions are natural.
First, the same ideas can be applied to total-effect nulls and to contrasts involving treated units, provided the focal set is chosen to preserve imputability.
Second, the weak-null theory can be extended to alternative weighting schemes and regression-adjusted focal-cluster statistics.
Third, the monotone PIRT framework can be adapted to more general exposure mappings, including network exposure mappings and continuous dose-response designs.
These extensions reinforce the main point: randomization inference under interference is most transparent when the null hypothesis, the imputable units, and the statistic are designed together.

\spacingset{1}
\bibliography{biblio}

\spacingset{1.5}
\small
\newpage

\appendix
\section{Formal CRT framework for the two-saturation relabeling test}
\label{app:crt-framework}

The main text describes the two-saturation test through focal units and a
cluster-level relabeling distribution. This section records the more formal
conditional randomization framework used to prove validity. The notation follows
the conditional randomization-test device of \citet{Basse2019}, but it is kept
in the appendix because the implementation of the method only requires the focal
sets \(U_j\) and the relabeling space \(\calA_J^{\mathrm{obs}}\). The purpose of
this section is therefore purely technical: it introduces the conditioning event
and restricted assignment space needed for the proofs below.

\subsection{General conditional-randomization notation}
\label{app:general-crt-notation}

Let \(P\) denote the known randomization distribution on the assignment space
\(\calZ\). A conditional randomization test draws a conditioning event
\[
C=(U,\calR),
\]
where \(U\) is a set of focal units and \(\calR\subseteq\calZ\) is a restricted
assignment set. A conditioning mechanism specifies how \(C\) is generated from
the realized assignment. We write it as
\[
m(C\mid Z)=f(U\mid Z)\,g(\calR\mid U,Z),
\]
where \(f\) selects the focal units and \(g\) selects the restricted assignment
set once the focal units are fixed.

Given \(C\), the conditional assignment law is
\begin{equation}\label{eq:app-crt-conditional-law}
\Pr(Z=z\mid C)
=
\frac{m(C\mid z)P(z)}{\sum_{\tilde z\in\calZ}m(C\mid \tilde z)P(\tilde z)},
\qquad z\in\calZ,
\end{equation}
with the convention that assignments with \(m(C\mid z)=0\) receive zero
conditional probability. For a statistic \(T(z\mid Y,C)\), the key requirement
is imputability under the null: for all assignments \(z,z'\) with positive
conditional probability,
\begin{equation}\label{eq:app-crt-imputability}
T\{z'\mid Y(z'),C\}
=
T\{z'\mid Y(z),C\}.
\end{equation}
When \eqref{eq:app-crt-imputability} holds, the statistic under the reference
assignments can be computed using the observed outcomes, and the usual
conditional randomization \(p\)-value is finite-sample valid.

\subsection{Specialization to the randomized saturation design}
\label{app:crt-specialization}

Fix \(s\neq s'\). For a generic assignment \(z=(a,d)\in\calZ\), define the
contrasted clusters
\[
J_{s,s'}(z):=\{j:a_j\in\{s,s'\}\}.
\]
For each cluster \(j\), choose \(k_j\) satisfying \eqref{eq:kj-condition}. Given
\(z=(a,d)\), the admissible focal sets are
\[
\mathcal U_{s,s'}(z)
=
\left\{
U\subseteq \bigcup_{j\in J_{s,s'}(z)}\calI_j:
|U\cap\calI_j|=k_j\ \forall j\in J_{s,s'}(z),\
 d_i=0\ \forall i\in U
\right\}.
\]
The focal-sampling component of the conditioning mechanism is uniform on this
set:
\[
f(U\mid z)=
|\mathcal U_{s,s'}(z)|^{-1}\1\{U\in\mathcal U_{s,s'}(z)\}.
\]
Equivalently, it samples \(k_j\) untreated units uniformly without replacement
from every contrasted cluster and does so independently across contrasted
clusters.

Given \(U\) and \(z=(a,d)\), define the restricted assignment set
\begin{equation}\label{eq:app-restricted-assignment-set}
\calZ_{s,s'}(U,z)
=
\left\{
z'=(a',d')\in\calZ:
\begin{array}{l}
 d_i'=0 \ \text{for all } i\in U,\\[2pt]
 a_j'=a_j \ \text{for all } j\notin J_{s,s'}(z),\\[2pt]
 a_j'\in\{s,s'\} \ \text{for all } j\in J_{s,s'}(z)
\end{array}
\right\}.
\end{equation}
Thus every focal unit remains untreated, labels outside the contrasted clusters
are held fixed, and only the two labels \(s\) and \(s'\) may vary across the
contrasted clusters. The restricted-assignment component of the conditioning
mechanism is
\[
g(\calR\mid U,z)=\1\{\calR=\calZ_{s,s'}(U,z)\}.
\]
For the realized assignment, the formal conditioning event is therefore
\begin{equation}\label{eq:app-conditioning-event}
C=
\left(U,\calZ_{s,s'}(U,Z^{\mathrm{obs}})\right).
\end{equation}

Let
\[
J=J^{\mathrm{obs}}=J_{s,s'}(Z^{\mathrm{obs}}),
\qquad
A_J^{\mathrm{obs}}=(A_j^{\mathrm{obs}}:j\in J),
\]
and define \(K_s\) and \(K_{s'}\) as in Section~\ref{subsec:conditioning-event}.
The first-stage design fixes the number of clusters assigned to each saturation
level, so the only first-stage variation left inside \eqref{eq:app-restricted-assignment-set}
is the relabeling of \(s\) and \(s'\) on the clusters in \(J\), preserving the
observed margins. Hence the relevant relabeling space is exactly
\(\calA_J^{\mathrm{obs}}\) in \eqref{eq:relabel-space}.
Lemma~\ref{lem:conditional-relabeling} below proves that
\[
A_J\mid C\sim \mathrm{Unif}(\calA_J^{\mathrm{obs}}).
\]
This is the formal justification for the simple cluster-level relabeling
distribution used in the main text.

The main-text theorem states finite-sample validity conditional on the focal
set \(U\), because \(U\) is the object selected and reported in the testing
algorithm. The proofs below establish the stronger statement conditional on the
formal event \(C\) in \eqref{eq:app-conditioning-event}. Since \(C\) contains
\(U\), the main-text statement follows by iterated expectations. For example, if
\(\Pr(p_T\le \alpha\mid C)\le \alpha\) for every realized \(C\), then
\[
\Pr(p_T\le \alpha\mid U)
=
\E\{\Pr(p_T\le \alpha\mid C)\mid U\}
\le \alpha.
\]
The same argument applies to the bounded-null \(p\)-value. 

\section{Proofs for the Two-Saturation CRT}\label{app:two-level-proofs}

Throughout this appendix, fix two distinct saturation levels \(s\neq s'\).
Write
\[
J=J^{\mathrm{obs}}=J_{s,s'}(Z^{\mathrm{obs}}),
\qquad
M=|J|,
\qquad
U_j=U\cap\calI_j,
\]
and recall that \(|U_j|=k_j\). We also write \(K_s\) and \(K_{s'}\) for the
observed numbers of clusters in \(J\) assigned to \(s\) and \(s'\), respectively.
All probabilities in the finite-sample conditional statements are taken with
respect to the randomization induced by the randomized saturation design and by the
focal-sampling rule.

\subsection{Two auxiliary lemmas}\label{app:auxiliary-lemmas}

We first record two elementary facts used repeatedly in the proofs. The first is
a purely probabilistic statement about randomization \(p\)-values on a finite
support. The second verifies that the conditioning event used in
Procedure~\ref{proc:two-level-crt} reduces the randomized saturation design to a uniform
cluster-level relabeling of \(s\) and \(s'\) over the contrasted clusters.

\begin{lemma}[Super-uniformity of randomization \(p\)-values]
\label{lem:randomization-pvalue}
Let \(\mathcal R\) be a finite set and let \(R\) be distributed according to a
probability mass function \(q\) on \(\mathcal R\). For any real-valued function
\(t:\mathcal R\to\mathbb R\), define
\[
p(R)
=
\sum_{r\in\mathcal R}
q(r)\1\{t(r)\ge t(R)\}.
\]
Then, for every \(\alpha\in[0,1]\),
\[
\Pr\{p(R)\le \alpha\}\le \alpha .
\]
\end{lemma}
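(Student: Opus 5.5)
The plan is the standard super-uniformity argument for randomization \(p\)-values on a finite support, organized around the distinct values of \(t\). First, order the finitely many distinct values of \(t\) on \(\mathcal R\) as \(v_1>v_2>\cdots>v_m\). For each \(r\), \(p(r)=G(t(r))\), where
\[
G(v):=\sum_{r'\in\mathcal R}q(r')\1\{t(r')\ge v\}=\Pr\{t(R)\ge v\}
\]
is the upper-tail function of the random variable \(t(R)\). So \(p(R)=G(t(R))\), and the claim reduces to the familiar fact that an upper-tail function evaluated at its own random variable is stochastically no smaller than a uniform.

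Next, fix \(\alpha\in[0,1]\). Because \(G\) is nonincreasing in \(v\), the event \(\{G(t(R))\le\alpha\}\) is an upper set in the values of \(t(R)\). If it is empty the bound is trivial. Otherwise let \(v^*\) be the smallest value \(v_k\) with \(G(v_k)\le\alpha\). Monotonicity gives
\[
\{p(R)\le\alpha\}=\{t(R)\ge v^*\},
\]
since every larger value \(v_{k'}\ge v^*\) satisfies \(G(v_{k'})\le G(v^*)\le\alpha\). Conversely, any value \(v<v^*\) in the support has \(G(v)>\alpha\) by the choice of \(v^*\).

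Finally, conclude
\[
\Pr\{p(R)\le\alpha\}=\Pr\{t(R)\ge v^*\}=G(v^*)\le\alpha.
\]

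The only delicate point is handling ties and the discreteness of \(t\). Working with the distinct values \(v_k\) and the nonincreasing step function \(G\), rather than with a continuous quantile argument, avoids that issue. No result from the earlier sections is needed; the lemma is purely probabilistic, and the \(\alpha=0\) and \(\alpha=1\) endpoints are covered by the same argument (empty event, or \(G\le 1\)).
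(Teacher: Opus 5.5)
Your proof is correct and follows essentially the same route as the paper. The paper likewise orders the distinct values of $t$, notes that $p(R)$ equals the upper-tail mass $G_\ell=\Pr\{t(R)\ge c_\ell\}$ whenever $t(R)=c_\ell$, and uses monotonicity of these cumulative masses together with the largest index $\ell_\alpha$ satisfying $G_\ell\le\alpha$, which corresponds to your smallest value $v^*$, to conclude $\Pr\{p(R)\le\alpha\}\le G_{\ell_\alpha}\le\alpha$.
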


\begin{proof}
Let \(c_1>c_2>\cdots>c_L\) be the distinct values taken by \(t(R)\). For each
\(\ell=1,\ldots,L\), define the point mass and upper-tail mass
\[
\Delta_\ell=\Pr\{t(R)=c_\ell\},
\qquad
G_\ell=\Pr\{t(R)\ge c_\ell\}=\sum_{m=1}^\ell \Delta_m .
\]
If \(t(R)=c_\ell\), then \(p(R)=G_\ell\). Hence
\[
\Pr\{p(R)\le \alpha\}
=
\sum_{\ell:G_\ell\le \alpha}\Delta_\ell .
\]
If the index set \(\{\ell:G_\ell\le \alpha\}\) is empty, the probability is zero.
Otherwise, let \(\ell_\alpha=\max\{\ell:G_\ell\le \alpha\}\). Since the sequence
\(G_\ell\) is nondecreasing in \(\ell\),
\[
\sum_{\ell:G_\ell\le \alpha}\Delta_\ell
\le
\sum_{\ell=1}^{\ell_\alpha}\Delta_\ell
=
G_{\ell_\alpha}
\le
\alpha .
\]
This proves the desired super-uniformity. The argument allows ties in the
statistic and is therefore the usual conservativeness property of finite
randomization \(p\)-values.
\end{proof}

\begin{lemma}[Conditional relabeling law]\label{lem:conditional-relabeling}
Under the randomized saturation design and the focal-sampling rule in
Procedure~\ref{proc:two-level-crt},
\[
A_J\mid C
\sim
\mathrm{Unif}\bigl(\calA_J^{\mathrm{obs}}\bigr),
\qquad
C=(U,\calZ_{s,s'}(U,Z^{\mathrm{obs}})).
\]
\end{lemma}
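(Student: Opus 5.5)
The plan is to compute the conditional law \eqref{eq:app-crt-conditional-law} directly: evaluate the numerator \(m(C\mid z)P(z)\) for every \(z\in\calZ\), and then marginalize over the second-stage assignment \(d\) to get the law of \(A_J\). Write \(C=(U,\calR)\) with \(\calR=\calZ_{s,s'}(U,Z^{\mathrm{obs}})\). The first step is to characterize the support of \(m(C\mid z)\). An assignment \(z=(a,d)\) has \(m(C\mid z)>0\) only if three things hold. First, \(U\in\mathcal U_{s,s'}(z)\). Second, \(\calZ_{s,s'}(U,z)=\calR\). Third, by the way \(U\) is formed, \(U\) meets exactly the clusters in \(J_{s,s'}(z)\), so \(J_{s,s'}(z)=J\).

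Matching the restricted sets then pins down the remaining labels. It forces \(a_j=A_j^{\mathrm{obs}}\) for \(j\notin J\), and \(a_j\in\{s,s'\}\) for \(j\in J\). Because the first-stage margins \(K_0,\ldots,K_L\) are fixed, \(a_J\) must contain exactly \(K_s\) entries equal to \(s\). Hence \(a_J\in\calA_J^{\mathrm{obs}}\). It also forces \(d_i=0\) for all \(i\in U\).

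Conversely, take any \(a_J\in\calA_J^{\mathrm{obs}}\), with \(a\) agreeing with \(A^{\mathrm{obs}}\) outside \(J\), and any \(d\) compatible with \(a\) that leaves \(U\) untreated. Then \(z=(a,d)\in\calZ\). Moreover \(\calZ_{s,s'}(U,z)=\calR\), since the definition \eqref{eq:app-restricted-assignment-set} depends on \(z\) only through \(J_{s,s'}(z)=J\) and the labels outside \(J\). Such a \(d\) exists because of condition \eqref{eq:kj-condition}: each cluster has \(n_j-m_j(a_j)\ge k_j\) untreated slots.

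The second step is to compute \(m(C\mid z)P(z)\) on this support. Here \(g=1\) and \(P(z)=\binom{K}{K_0,\ldots,K_L}^{-1}\prod_j\binom{n_j}{m_j(a_j)}^{-1}\). The focal-selection term is \(f(U\mid z)=\prod_{j\in J}\binom{n_j-m_j(a_j)}{k_j}^{-1}\). Summing over the admissible \(d\) factorizes over clusters.

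\begin{itemize}
\item For \(j\notin J\), the factor is free of \(a_J\).
\item For \(j\in J\), the number of \(d_{\calI_j}\) with \(m_j(a_j)\) treated units avoiding \(U_j\) is \(\binom{n_j-k_j}{m_j(a_j)}\).
\end{itemize}

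So the marginal weight of \(a_J\) is, up to factors not depending on \(a_J\),
\[
\prod_{j\in J}\frac{\binom{n_j-k_j}{m_j(a_j)}}{\binom{n_j}{m_j(a_j)}\binom{n_j-m_j(a_j)}{k_j}}.
\]
By the identity \(\binom{n}{m}\binom{n-m}{k}=\binom{n}{k}\binom{n-k}{m}\), each factor equals \(\binom{n_j}{k_j}^{-1}\), which does not depend on \(a_j\). This holds because \(k_j\) is chosen before the assignment, as in Remark~\ref{rem:choice-kj}. Therefore the conditional mass of \(A_J\) is constant on \(\calA_J^{\mathrm{obs}}\), and it is zero off \(\calA_J^{\mathrm{obs}}\).

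I expect the main obstacle to be the first step, namely the exact characterization of the support. Two points need care: \(J_{s,s'}(z)\) must be recovered from \(U\), and the equality \(\calZ_{s,s'}(U,z)=\calR\) must be shown to impose no constraint on \(a_J\) beyond membership in \(\calA_J^{\mathrm{obs}}\). The combinatorial cancellation in the second step is routine once the support is fixed.
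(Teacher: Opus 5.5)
Your proposal is correct and follows essentially the same route as the paper. Both restrict the support to \(a_J\in\calA_J^{\mathrm{obs}}\) with labels outside \(J\) held fixed, then sum the focal-sampling and second-stage probabilities over \(d\) cluster by cluster, and use the cancellation \(\binom{n_j-k_j}{m_j(a_j)}\binom{n_j-m_j(a_j)}{k_j}^{-1}\binom{n_j}{m_j(a_j)}^{-1}=\binom{n_j}{k_j}^{-1}\). The only cosmetic difference is bookkeeping: you work with the joint mass \(m(C\mid z)P(z)\) directly, whereas the paper uses Bayes' rule as \(\Pr(C\mid A=a)\Pr(A_J=a_J\mid A_{-J}=A_{-J}^{\mathrm{obs}})\) and then notes the second factor is uniform under complete randomization.
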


\begin{proof}
Let
\[
Z^{\mathrm{obs}}=(A^{\mathrm{obs}},D^{\mathrm{obs}}),
\qquad
J=J^{\mathrm{obs}}=J_{s,s'}(Z^{\mathrm{obs}}),
\qquad
U_j=U\cap\calI_j,
\]
and write
\[
\mathcal R^{\mathrm{obs}}
=
\calZ_{s,s'}(U,Z^{\mathrm{obs}}).
\]
Thus the realized conditioning event is
\[
C=(U,\mathcal R^{\mathrm{obs}}).
\]
Recall that the conditioning mechanism first samples focal units and then sets
the restricted assignment set equal to the assignments that keep the focal units
untreated, keep noncontrasted saturation labels fixed, and allow only the labels
\(s\) and \(s'\) to vary on the contrasted clusters. Equivalently, for a generic
assignment \(z=(a,d)\), the conditioning probability can be written as
\[
m(C\mid z)
=
f(U\mid z)\,
\1\{\mathcal R^{\mathrm{obs}}=\calZ_{s,s'}(U,z)\},
\]
where \(f(U\mid z)\) is the focal-sampling probability. In the present design,
\[
f(U\mid z)
=
\prod_{j\in J_{s,s'}(z)}
\binom{n_j-m_j(a_j)}{k_j}^{-1}
\1\{U_j\subseteq\{i\in\calI_j:d_i=0\}\},
\]
provided \(U\) contains exactly \(k_j\) units in each contrasted cluster, and
\(f(U\mid z)=0\) otherwise.

We first identify the assignments that can have positive conditional probability
under \(C\). If \(m(C\mid z)>0\), then
\(\mathcal R^{\mathrm{obs}}=\calZ_{s,s'}(U,z)\). By the definition of
\(\calZ_{s,s'}(U,z)\), this equality forces the set of contrasted clusters to be
\(J\), forces the saturation labels outside \(J\) to equal
\(A_{-J}^{\mathrm{obs}}\), and allows the saturation labels inside \(J\) only to
be \(s\) or \(s'\). Because the first-stage design has fixed saturation margins,
the only possible values of \(A_J\) are therefore the relabelings
\[
\calA_J^{\mathrm{obs}}
=
\left\{
 a_J\in\{s,s'\}^{|J|}:
 \sum_{j\in J}\1\{a_j=s\}=K_s
\right\}.
\]
Thus, for any \(a_J\notin\calA_J^{\mathrm{obs}}\),
\(\Pr(A_J=a_J\mid C)=0\). It remains to show that all
\(a_J\in\calA_J^{\mathrm{obs}}\) have the same conditional probability.

Fix \(a_J\in\calA_J^{\mathrm{obs}}\), and set
\[
a=(a_J,A_{-J}^{\mathrm{obs}}).
\]
By Bayes' rule, over \(a_J\in\calA_J^{\mathrm{obs}}\),
\begin{align*}
\Pr(A_J=a_J\mid C)
&\propto
\Pr(C\mid A_J=a_J,A_{-J}=A_{-J}^{\mathrm{obs}})
\Pr(A_J=a_J,A_{-J}=A_{-J}^{\mathrm{obs}}) \\
&\propto
\Pr(C\mid A=a)
\Pr(A_J=a_J\mid A_{-J}=A_{-J}^{\mathrm{obs}}),
\end{align*}
where factors depending only on \(A_{-J}^{\mathrm{obs}}\) are absorbed into the
constant of proportionality. We now show that \(\Pr(C\mid A=a)\) is constant in
\(a_J\).

Conditional on \(A=a\), the second-stage treatment assignment factorizes across
clusters, so
\[
\Pr(D=d\mid A=a)
=
\prod_{\ell=1}^K
\binom{n_\ell}{m_\ell(a_\ell)}^{-1}
\1\{d_{\calI_\ell}\in\mathcal D_\ell(a_\ell)\}.
\]
Therefore
\begin{align*}
\Pr(C\mid A=a)
&=
\sum_{d:(a,d)\in\calZ}
 m(C\mid (a,d))\Pr(D=d\mid A=a) \\
&=
\sum_{d:(a,d)\in\calZ}
 f(U\mid (a,d))
 \1\{\mathcal R^{\mathrm{obs}}=\calZ_{s,s'}(U,(a,d))\}
 \Pr(D=d\mid A=a).
\end{align*}
For the fixed \(a=(a_J,A_{-J}^{\mathrm{obs}})\) under consideration, every
\(a_J\in\calA_J^{\mathrm{obs}}\) has \(J_{s,s'}(a,d)=J\). Moreover, once \(U\)
and \(a\) are fixed, the set \(\calZ_{s,s'}(U,(a,d))\) does not depend on the
particular second-stage vector \(d\), except that the focal-sampling probability
is zero unless every focal unit is untreated. Hence the indicator
\(\1\{\mathcal R^{\mathrm{obs}}=\calZ_{s,s'}(U,(a,d))\}\) equals one whenever
\(f(U\mid (a,d))>0\), and the previous display reduces to
\[
\Pr(C\mid A=a)
=
\sum_{d:(a,d)\in\calZ}
 f(U\mid (a,d))\Pr(D=d\mid A=a).
\]
Using the product structure of both the second-stage assignment and the focal
sampling, this sum factorizes as
\begin{align*}
\Pr(C\mid A=a)
&=
\prod_{j\in J}
\left[
\sum_{d_{\calI_j}\in\mathcal D_j(a_j)}
 f_j(U_j\mid A_j=a_j,D_{\calI_j}=d_{\calI_j})
 \Pr(D_{\calI_j}=d_{\calI_j}\mid A_j=a_j)
\right] \\
&\quad\times
\prod_{j\notin J}
\left[
\sum_{d_{\calI_j}\in\mathcal D_j(a_j)}
 \Pr(D_{\calI_j}=d_{\calI_j}\mid A_j=a_j)
\right].
\end{align*}
The second product is equal to one. Thus only the contrasted clusters contribute
to \(\Pr(C\mid A=a)\).

Now fix a contrasted cluster \(j\in J\). Conditional on \(A_j=a_j\), exactly
\(m_j(a_j)\) units are treated, and each treatment vector in
\(\mathcal D_j(a_j)\) is equally likely:
\[
\Pr(D_{\calI_j}=d_{\calI_j}\mid A_j=a_j)
=
\binom{n_j}{m_j(a_j)}^{-1},
\qquad
 d_{\calI_j}\in\mathcal D_j(a_j).
\]
Given \(D_{\calI_j}=d_{\calI_j}\), the focal set \(U_j\) is sampled uniformly
from the \(n_j-m_j(a_j)\) untreated units, so
\[
f_j(U_j\mid A_j=a_j,D_{\calI_j}=d_{\calI_j})
=
\binom{n_j-m_j(a_j)}{k_j}^{-1}
\1\{U_j\subseteq\{i\in\calI_j:d_i=0\}\}.
\]
The condition \(1\le k_j\le \min\{n_j-m_j(s),n_j-m_j(s')\}\) guarantees that
this probability is well-defined for both possible labels \(a_j=s\) and
\(a_j=s'\).

The number of vectors \(d_{\calI_j}\in\mathcal D_j(a_j)\) for which every unit
in \(U_j\) is untreated is
\[
\binom{n_j-k_j}{m_j(a_j)},
\]
because all \(m_j(a_j)\) treated units must be chosen from the \(n_j-k_j\)
non-focal units. Therefore the contrasted-cluster contribution is
\begin{align*}
&\sum_{d_{\calI_j}\in\mathcal D_j(a_j)}
 f_j(U_j\mid A_j=a_j,D_{\calI_j}=d_{\calI_j})
 \Pr(D_{\calI_j}=d_{\calI_j}\mid A_j=a_j) \\
&\qquad=
\binom{n_j-k_j}{m_j(a_j)}
\binom{n_j-m_j(a_j)}{k_j}^{-1}
\binom{n_j}{m_j(a_j)}^{-1} \\
&\qquad=
\frac{(n_j-k_j)!}{m_j(a_j)!\{n_j-k_j-m_j(a_j)\}!}
\cdot
\frac{k_j!\{n_j-m_j(a_j)-k_j\}!}{\{n_j-m_j(a_j)\}!}
\cdot
\frac{m_j(a_j)!\{n_j-m_j(a_j)\}!}{n_j!} \\
&\qquad=
\frac{k_j!(n_j-k_j)!}{n_j!}
=
\binom{n_j}{k_j}^{-1}.
\end{align*}
This final expression no longer contains \(m_j(a_j)\), and hence is the same
whether \(a_j=s\) or \(a_j=s'\). Multiplying over \(j\in J\) gives
\[
\Pr(C\mid A=a)
=
\prod_{j\in J}\binom{n_j}{k_j}^{-1},
\]
which is independent of the relabeling \(a_J\in\calA_J^{\mathrm{obs}}\).

We have therefore shown that
\[
\Pr(A_J=a_J\mid C)
\propto
\Pr(A_J=a_J\mid A_{-J}=A_{-J}^{\mathrm{obs}}),
\qquad a_J\in\calA_J^{\mathrm{obs}}.
\]
Under first-stage complete randomization with fixed saturation margins, every
full first-stage assignment satisfying the margins has the same probability.
After conditioning on \(A_{-J}=A_{-J}^{\mathrm{obs}}\), the only remaining
freedom is the placement of the observed \(K_s\) labels equal to \(s\) among the
\(|J|\) clusters in \(J\); the remaining \(K_{s'}\) clusters receive label
\(s'\). Hence
\[
\Pr(A_J=a_J\mid A_{-J}=A_{-J}^{\mathrm{obs}})
=
\binom{|J|}{K_s}^{-1}
=
|\calA_J^{\mathrm{obs}}|^{-1},
\qquad a_J\in\calA_J^{\mathrm{obs}}.
\]
Combining this with the constancy of \(\Pr(C\mid A=a)\) and normalizing over
\(\calA_J^{\mathrm{obs}}\) yields
\[
\Pr(A_J=a_J\mid C)
=
|\calA_J^{\mathrm{obs}}|^{-1},
\qquad a_J\in\calA_J^{\mathrm{obs}}.
\]
This is exactly
\(A_J\mid C\sim\mathrm{Unif}(\calA_J^{\mathrm{obs}})\).
\end{proof}

\subsection{Proof of Theorem~\ref{thm:two-level-exact}(a)}
\label{app:proof-ps}

\begin{proof}
Take any assignment \(z'=(a',d')\in\calZ_{s,s'}(U,Z^{\mathrm{obs}})\). By the
construction of the restricted assignment set,
\[
d_i'=0
\quad\text{and}\quad
a'_{[i]}\in\{s,s'\}
\qquad
\text{for every } i\in U.
\]
Thus every focal unit has exposure either \((0,s)\) or \((0,s')\) under every
assignment in the conditioning set. Under the partially sharp spillover null
\(H_{0,\mathrm{PS}}^{s,s'}\),
\[
Y_i(0,s)=Y_i(0,s')
\qquad
\text{for every } i.
\]
By Assumption~\ref{ass:hpi}, each focal outcome under any assignment in the
conditioning set is therefore equal to its observed value:
\[
Y_i(z')=Y_i(Z^{\mathrm{obs}})
\qquad
\text{for all } i\in U
\text{ and all } z'\in\calZ_{s,s'}(U,Z^{\mathrm{obs}}).
\]
Hence any statistic that depends only on focal outcomes and the relabeling vector
is imputable over the entire conditional assignment set.

By Lemma~\ref{lem:conditional-relabeling},
\[
A_J\mid C\sim \mathrm{Unif}(\calA_J^{\mathrm{obs}}).
\]
The conditional \(p\)-value in \eqref{eq:generic-crt-pvalue} is therefore an
ordinary finite randomization \(p\)-value on the relabeling space
\(\calA_J^{\mathrm{obs}}\). Applying Lemma~\ref{lem:randomization-pvalue} with
\(\mathcal R=\calA_J^{\mathrm{obs}}\), \(R=A_J\), and
\(t=T_U\) gives
\[
\Pr\{p_T\le \alpha\mid C\}\le \alpha,
\qquad \alpha\in[0,1].
\]
\end{proof}

\subsection{Proof of Theorem~\ref{thm:two-level-exact}(b)}
\label{app:proof-bounded}

\begin{proof}
Conditional on \(C\), Lemma~\ref{lem:conditional-relabeling} implies that the
binary vector
\[
W_j=\1\{A_j^{\mathrm{obs}}=s\},
\qquad j\in J,
\]
is uniform over
\[
\mathcal W
=
\left\{
 w\in\{0,1\}^{J}:\sum_{j\in J}w_j=K_s
\right\}.
\]
For each \(j\in J\), define the focal-cluster potential means
\[
Y_j(s)=\frac{1}{k_j}\sum_{i\in U_j}Y_i(0,s),
\qquad
Y_j(s')=\frac{1}{k_j}\sum_{i\in U_j}Y_i(0,s'),
\]
and let
\[
\tau_j=Y_j(s)-Y_j(s').
\]
The unit-level bounded null \(H_{\delta,\mathrm B}^{s,s'}\) implies
\[
\tau_j\le \delta
\qquad
\text{for every } j\in J,
\]
because averaging preserves the inequality. The observed focal-cluster mean is
\[
\widetilde Y_j^{\mathrm{obs}}
=
W_jY_j(s)+(1-W_j)Y_j(s').
\]
The \(s'\)-anchored imputation in \eqref{eq:stat-bounded} is
\[
\widetilde Y_{j,\delta}^{s'}
=
\widetilde Y_j^{\mathrm{obs}}-\delta W_j
=
Y_j(s')-W_j(\delta-\tau_j).
\]
Define
\[
\eta_j=\delta-\tau_j\ge 0,
\qquad
y_j^0=Y_j(s'),
\qquad
y_j^{\mathrm{imp}}=\widetilde Y_{j,\delta}^{s'}.
\]
Then, in vector notation,
\[
y^{\mathrm{imp}}=y^0-W\circ\eta,
\]
where \(\circ\) denotes elementwise multiplication.

For any \(w\in\mathcal W\) and \(y\in\mathbb R^J\), define
\[
t(w,y)
=
\frac{1}{K_s}\sum_{j\in J}w_jy_j
-
\frac{1}{K_{s'}}\sum_{j\in J}(1-w_j)y_j.
\]
The bounded-null \(p\)-value based on \(T_{U,\delta}^{\mathrm B}\) is
\[
p_T
=
\frac{1}{|\mathcal W|}
\sum_{w\in\mathcal W}
\1\{t(w,y^{\mathrm{imp}})\ge t(W,y^{\mathrm{imp}})\}.
\]
Now define the reference \(p\)-value based on the true \(s'\)-potential means:
\[
p^\star
=
\frac{1}{|\mathcal W|}
\sum_{w\in\mathcal W}
\1\{t(w,y^0)\ge t(W,y^0)\}.
\]
We show that \(p_T\ge p^\star\).

For any \(w\in\mathcal W\), because \(y^0=y^{\mathrm{imp}}+W\circ\eta\),
\begin{align*}
t(w,y^0)-t(w,y^{\mathrm{imp}})
&=
\frac{1}{K_s}\sum_{j\in J}w_jW_j\eta_j
-
\frac{1}{K_{s'}}\sum_{j\in J}(1-w_j)W_j\eta_j.
\end{align*}
The second term is nonpositive, and \(w_jW_j\le W_j\) for every \(j\). Hence
\[
t(w,y^0)-t(w,y^{\mathrm{imp}})
\le
\frac{1}{K_s}\sum_{j\in J}W_j\eta_j.
\]
But
\[
t(W,y^0)-t(W,y^{\mathrm{imp}})
=
\frac{1}{K_s}\sum_{j\in J}W_j\eta_j
-
\frac{1}{K_{s'}}\sum_{j\in J}(1-W_j)W_j\eta_j
=
\frac{1}{K_s}\sum_{j\in J}W_j\eta_j.
\]
Therefore,
\[
t(w,y^0)-t(w,y^{\mathrm{imp}})
\le
 t(W,y^0)-t(W,y^{\mathrm{imp}}),
\]
or equivalently,
\[
t(w,y^{\mathrm{imp}})-t(W,y^{\mathrm{imp}})
\ge
 t(w,y^0)-t(W,y^0).
\]
It follows that, for every \(w\in\mathcal W\),
\[
\1\{t(w,y^{\mathrm{imp}})\ge t(W,y^{\mathrm{imp}})\}
\ge
\1\{t(w,y^0)\ge t(W,y^0)\}.
\]
Averaging over \(w\in\mathcal W\) gives
\[
p_T\ge p^\star.
\]

Finally, conditional on \(C\), the vector \(W\) is uniform over \(\mathcal W\),
and \(y^0\) is fixed. Hence \(p^\star\) is an ordinary randomization \(p\)-value
for the statistic \(t(W,y^0)\) under complete randomization on the contrasted
clusters. By Lemma~\ref{lem:randomization-pvalue},
\[
\Pr(p^\star\le \alpha\mid C)\le \alpha,
\qquad \alpha\in[0,1].
\]
Since \(p_T\ge p^\star\),
\[
\Pr(p_T\le \alpha\mid C)
\le
\Pr(p^\star\le \alpha\mid C)
\le \alpha.
\]
This proves finite-sample conditional validity for the shifted statistic
\(T_{U,\delta}^{\mathrm B}\). The same argument applies to any one-sided
statistic satisfying the same least-favorable monotonicity property: replacing
\(y^0\) by the equality-boundary imputation can only increase the reference
\(p\)-value relative to the true bounded configuration, so the equality boundary
is least favorable for the upper-tail test.
\end{proof}

\section{Regularity conditions and Proof of Theorem~\ref{thm:weak-null-asymptotic}}
\label{app:proof-weak}

\subsection{Regularity conditions}\label{app:weak-regularity}

For the weak-null result in Section \ref{sec:weak-null}, let \(J_K=\{j:A_j\in\{s,s'\}\}\) and \(M_K=|J_K|=K_s+K_{s'}\).
For an auxiliary focal sample \(U_j\) of size \(k_j\) from cluster \(j\), define
\[
\widetilde Y_j(a)
=
\frac{1}{k_j}\sum_{i\in U_j}Y_i(0,a),
\qquad a\in\{s,s'\}.
\]
For a prespecified weight vector \(\lambda\), let
\[
w_{jK}:=K\lambda_{jK},
\qquad
X_j(a):=w_{jK}\widetilde Y_j(a),
\qquad
X_j^\tau:=X_j(s)-X_j(s').
\]
Also define
\[
\mu_{X,a,j}:=w_{jK}\mu_{a,j},
\qquad
\mu_{X,\tau,j}:=\mu_{X,s,j}-\mu_{X,s',j}.
\]
The randomness in \(X_j(a)\) and \(X_j^\tau\) comes from the focal sampling.

\begin{assumption}[Weighted weak-null regularity]\label{ass:weak-regularity}
Fix \(s\neq s'\) and a sequence of nonnegative, nonrandom weights \(\lambda_{1K},\ldots,\lambda_{KK}\) satisfying \(\sum_{j=1}^K\lambda_{jK}=1\).
Let \(w_{jK}=K\lambda_{jK}\).
Along a sequence of experiments indexed by \(K\), the following conditions hold.
\begin{enumerate}[(i)]
\item \textit{Stable design fractions.}
\[
\frac{K_s}{K}\to\rho_s\in(0,1),
\qquad
\frac{K_{s'}}{K}\to\rho_{s'}\in(0,1).
\]
Let \(\kappa=\rho_s+\rho_{s'}\), \(\pi_s=\rho_s/\kappa\), and \(\pi_{s'}=\rho_{s'}/\kappa\).

\item \textit{Transformed fourth-moment bounds.}
For \(a\in\{s,s'\}\),
\[
\sup_K
\frac{1}{K}\sum_{j=1}^K
w_{jK}^4
\frac{1}{n_j}\sum_{i\in\calI_j}Y_i(0,a)^4
<\infty,
\]
and the same bound holds with \(Y_i(0,a)\) replaced by \(Y_i(0,s)-Y_i(0,s')\).

\item \textit{Nondegenerate limiting variances.}
The finite-population variances of \(\{X_j(s)\}_{j=1}^K\) and \(\{X_j(s')\}_{j=1}^K\) converge in probability to positive finite limits.
The finite-population variance of \(\{X_j^\tau\}_{j=1}^K\) converges in probability to a finite limit.

\item \textit{Maximal negligibility.}
For \(a\in\{s,s'\}\),
\[
\max_{1\le j\le K}
\left|
X_j(a)
-
K^{-1}\sum_{\ell=1}^K X_\ell(a)
\right|
=o_p(K^{1/2}),
\]
and the analogous condition holds for \(X_j^\tau\).
\end{enumerate}
\end{assumption}

\begin{remark}[Role of the weights]\label{rem:weak-weights}
Assumption~\ref{ass:weak-regularity} is imposed on the transformed cluster-level array because the weak-null target determines the scale of the statistic.
For the equally weighted cluster-average target, \(w_{jK}=1\).
For the unit-average target, \(w_{jK}=n_j/(N/K)\), so the assumption permits unit-average weak-null inference with unequal cluster sizes but requires the corresponding weighted moments and maximal terms to remain controlled.
\end{remark}

\subsection{Proof of Theorem~\ref{thm:weak-null-asymptotic}}

Fix the weight sequence \(\lambda_{1K},\ldots,\lambda_{KK}\), and write
\[
w_{jK}:=K\lambda_{jK}.
\]
For \(a\in\{s,s'\}\), define
\[
\mu_{a,j}
=
\frac{1}{n_j}\sum_{i\in\calI_j}Y_i(0,a),
\qquad
\tau_{ij}=Y_i(0,s)-Y_i(0,s'),
\qquad
\mu_{\tau,j}=\mu_{s,j}-\mu_{s',j}.
\]
For the auxiliary focal sample \(U_j\), write
\[
\widetilde Y_j(a)=\frac{1}{k_j}\sum_{i\in U_j}Y_i(0,a),
\qquad
\widetilde\tau_j=\widetilde Y_j(s)-\widetilde Y_j(s').
\]
Define the transformed focal-cluster potential outcomes
\[
X_j(a):=w_{jK}\widetilde Y_j(a),
\qquad
X_j^\tau:=X_j(s)-X_j(s')=w_{jK}\widetilde\tau_j,
\]
and the transformed cluster-level means
\[
\mu_{X,a,j}:=w_{jK}\mu_{a,j},
\qquad
\mu_{X,\tau,j}:=\mu_{X,s,j}-\mu_{X,s',j}=w_{jK}\mu_{\tau,j}.
\]
Because \(w_{jK}=K\lambda_{jK}\),
\[
\frac{1}{K}\sum_{j=1}^K\mu_{X,\tau,j}
=
\sum_{j=1}^K\lambda_{jK}\tau_j^{s,s'}.
\]
Thus the weighted weak null \(H_{0,\mathrm W}^{s,s'}(\lambda)\) is the equally weighted weak null for the transformed cluster-level potential outcomes \(\{X_j(s),X_j(s')\}_{j=1}^K\).

Let
\[
v_{X,a,j}:=\Var\{X_j(a)\},
\qquad
v_{X,\tau,j}:=\Var(X_j^\tau),
\]
where the variance is over focal sampling within cluster \(j\).
Equivalently,
\[
v_{X,a,j}=w_{jK}^2v_{a,j},
\qquad
v_{X,\tau,j}=w_{jK}^2v_{\tau,j},
\]
where \(v_{a,j}=\Var\{\widetilde Y_j(a)\}\) and \(v_{\tau,j}=\Var(\widetilde\tau_j)\).

Let
\[
J_K=J_{s,s'}(Z),
\qquad
M_K=|J_K|=K_s+K_{s'}.
\]
The proof uses the following sequential representation of the contrasted part of the design.

\begin{lemma}[Equivalent sequential representation]
\label{lem:sequential-representation}
For fixed \(s\neq s'\), the joint law of the contrasted clusters, focal sets, and saturation labels can be represented as follows:
\begin{enumerate}[(i)]
\item draw \(J_K\) uniformly from all subsets of \([K]\) of size \(M_K\);
\item conditional on \(J_K\), draw \(U_j\subseteq\calI_j\) independently across \(j\in J_K\), uniformly among all subsets of size \(k_j\);
\item conditional on \((J_K,U)\), assign exactly \(K_s\) of the clusters in \(J_K\) to saturation \(s\), and the remaining \(K_{s'}\) clusters to saturation \(s'\), uniformly over all such relabelings.
\end{enumerate}
\end{lemma}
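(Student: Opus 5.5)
We prove Lemma~\ref{lem:sequential-representation} by computing the joint probability mass function of \((J_K,U,A_{J_K})\) under the actual design plus focal sampling, and checking that it factorizes as the product of the three stated stages. Fix a subset \(S\subseteq[K]\) with \(|S|=M_K\), focal sets \(u=(u_j)_{j\in S}\) with \(u_j\subseteq\calI_j\) and \(|u_j|=k_j\), and a labeling \(a_S\in\{s,s'\}^S\) with exactly \(K_s\) entries equal to \(s\). The goal is to show
\[
\Pr(J_K=S,\ U=u,\ A_S=a_S)
=
\binom{K}{M_K}^{-1}
\prod_{j\in S}\binom{n_j}{k_j}^{-1}
\binom{M_K}{K_s}^{-1}.
\]
Summing this over \(a_S\) and \(u\) then shows that \(J_K\) is uniform, that \(U\) given \(J_K\) is as in (ii), and that \(A_S\) given \((J_K,U)\) is uniform as in (iii).

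The first step handles the first stage. Under complete randomization with margins \((K_0,\dots,K_L)\), I would write \(\Pr(A_S=a_S, J_K=S)\) as a sum over completions \(a_{-S}\) using labels other than \(s,s'\) with the remaining margins. The number of such completions is a multinomial coefficient depending only on \(K-M_K\) and the margins \(K_\ell\) for \(\ell\notin\{s,s'\}\). It therefore does not depend on \(S\) or \(a_S\), so the first-stage probability is constant over all \((S,a_S)\) pairs. There are \(\binom{K}{M_K}\binom{M_K}{K_s}\) such pairs, and they exhaust the event that exactly \(M_K\) clusters carry labels in \(\{s,s'\}\), which has probability one. Hence each pair has probability \(\binom{K}{M_K}^{-1}\binom{M_K}{K_s}^{-1}\).

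The second step reuses the cluster-level calculation from the proof of Lemma~\ref{lem:conditional-relabeling}. Given \(A=a\), the second-stage assignment and the focal sampling are independent across clusters. The probability that the focal set in cluster \(j\in S\) equals \(u_j\) is
\[
\binom{n_j-k_j}{m_j(a_j)}\binom{n_j-m_j(a_j)}{k_j}^{-1}\binom{n_j}{m_j(a_j)}^{-1}=\binom{n_j}{k_j}^{-1},
\]
which is free of \(a_j\in\{s,s'\}\). Clusters outside \(S\) contribute factor one. Multiplying the two steps gives the displayed factorization, and the three conditional statements follow by marginalization.

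The main obstacle is bookkeeping rather than substance. The identity \(\binom{n_j-k_j}{m}\binom{n_j-m}{k_j}^{-1}\binom{n_j}{m}^{-1}=\binom{n_j}{k_j}^{-1}\) must hold for both labels \(s\) and \(s'\), which requires \(k_j\le n_j-m_j(a)\) for \(a\in\{s,s'\}\); this is exactly condition~\eqref{eq:kj-condition}. We also need \(M_K\) and \(K_s\) to be fixed by design, which holds because the margins are fixed. A further point is that focal sampling in the actual procedure draws only untreated units. The representation in (ii), uniform over all size-\(k_j\) subsets of \(\calI_j\), is correct only for the marginal law of \(U\) after integrating out \(D\). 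I would state this explicitly, since the lemma concerns the joint law of \((J_K,U,A_{J_K})\) alone and not of \(D\).
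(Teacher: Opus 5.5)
Your proof is correct and follows essentially the paper's route. First-stage complete randomization with fixed margins makes \(J_K\) a simple random sample with uniform \(s/s'\) labels inside it, and the per-cluster identity \(\binom{n_j-k_j}{m}\binom{n_j-m}{k_j}^{-1}\binom{n_j}{m}^{-1}=\binom{n_j}{k_j}^{-1}\) shows that the focal-set law does not depend on the label. Your explicit factorization of the joint mass function of \((J_K,U,A_{J_K})\), and your remark that (ii) describes the law of \(U\) only after integrating out \(D\), just make precise the final combining step that the paper states verbally.
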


\begin{proof}
The first statement follows from first-stage complete randomization with fixed margins: the set of clusters assigned to either \(s\) or \(s'\) is a simple random sample without replacement of size \(M_K=K_s+K_{s'}\) from \([K]\).
Given \(J_K\), the first-stage labels inside \(J_K\) are uniform over all allocations of \(K_s\) labels \(s\) and \(K_{s'}\) labels \(s'\).

It remains to show that the focal sets can be drawn before the relabeling and that their distribution does not depend on the label \(s\) or \(s'\).
Fix a cluster \(j\), a label \(a\in\{s,s'\}\), and a subset \(u\subseteq\calI_j\) with \(|u|=k_j\).
Conditional on \(A_j=a\),
\begin{align*}
\Pr(U_j=u\mid A_j=a)
&=
\sum_{d_{\calI_j}\in\mathcal D_j(a)}
\Pr(U_j=u\mid A_j=a,D_{\calI_j}=d_{\calI_j})
\Pr(D_{\calI_j}=d_{\calI_j}\mid A_j=a) \\
&=
\binom{n_j-k_j}{m_j(a)}
\binom{n_j-m_j(a)}{k_j}^{-1}
\binom{n_j}{m_j(a)}^{-1} \\
&=
\binom{n_j}{k_j}^{-1}.
\end{align*}
The last expression is independent of \(a\in\{s,s'\}\).
The independence of the focal sets across clusters follows from the product form of the second-stage randomization and from the independent focal sampling across clusters.
Hence the same joint law is obtained by drawing \(J_K\), then drawing the focal sets, and then relabeling the clusters in \(J_K\) with fixed margins.
\end{proof}

The next proposition records the moment convergence facts used in the weak-null proof.
It is stated for the transformed array because the weak-null target determines the scale of the cluster-level statistic.

\begin{proposition}[Convergence of transformed focal finite-population moments]
\label{prop:focal-moment-convergence}
For \(a\in\{s,s'\}\), define
\[
\bar X_{J_K}(a)
=
\frac{1}{M_K}\sum_{j\in J_K}X_j(a),
\qquad
S_{X,a,F}^2(J_K,U)
=
\frac{1}{M_K-1}
\sum_{j\in J_K}
\{X_j(a)-\bar X_{J_K}(a)\}^2,
\]
and define analogously
\[
\bar X_{J_K}^\tau
=
\frac{1}{M_K}\sum_{j\in J_K}X_j^\tau,
\qquad
S_{X,\tau,F}^2(J_K,U)
=
\frac{1}{M_K-1}
\sum_{j\in J_K}
(X_j^\tau-\bar X_{J_K}^\tau)^2.
\]
Let
\[
\bar\mu_{X,a,J_K}
=
\frac{1}{M_K}\sum_{j\in J_K}\mu_{X,a,j},
\qquad
\bar\mu_{X,\tau,J_K}
=
\frac{1}{M_K}\sum_{j\in J_K}\mu_{X,\tau,j},
\]
and define the deterministic focal-sampling targets
\[
S_{X,a,K}^2(J_K)
=
\frac{1}{M_K-1}
\sum_{j\in J_K}(\mu_{X,a,j}-\bar\mu_{X,a,J_K})^2
+
\frac{1}{M_K}\sum_{j\in J_K}v_{X,a,j},
\]
\[
S_{X,\tau,K}^2(J_K)
=
\frac{1}{M_K-1}
\sum_{j\in J_K}(\mu_{X,\tau,j}-\bar\mu_{X,\tau,J_K})^2
+
\frac{1}{M_K}\sum_{j\in J_K}v_{X,\tau,j}.
\]
Under Assumption~\ref{ass:weak-regularity}, along any subsequence on which the finite-population limits below exist,
\[
S_{X,a,F}^2(J_K,U)-S_{X,a,K}^2(J_K)=o_p(1),
\qquad a\in\{s,s'\},
\]
and
\[
S_{X,\tau,F}^2(J_K,U)-S_{X,\tau,K}^2(J_K)=o_p(1).
\]
Moreover, if the finite-population variances of \(\{X_j(a)\}_{j=1}^K\) and \(\{X_j^\tau\}_{j=1}^K\) converge in probability to \(S_{X,a}^2\) and \(S_{X,\tau}^2\), respectively, then
\[
S_{X,a,F}^2(J_K,U)\xrightarrow{p}S_{X,a}^2,
\qquad
S_{X,\tau,F}^2(J_K,U)\xrightarrow{p}S_{X,\tau}^2.
\]
Consequently, conditional on \((J_K,U)\), the arm-specific sample variances in \eqref{eq:stat-studentized} satisfy
\[
\hat S_{X,U}^2(a;A_J)-S_{X,a,F}^2(J_K,U)=o_p(1),
\qquad a\in\{s,s'\},
\]
and
\[
M_K\hat V_{X,U}^{\mathrm{Ney}}(A_J)
\xrightarrow{p}
D_X
:=
\frac{S_{X,s}^2}{\pi_s}+
\frac{S_{X,s'}^2}{\pi_{s'}}.
\]
\end{proposition}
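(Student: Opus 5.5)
We will use the sequential representation of Lemma~\ref{lem:sequential-representation}. It lets us treat \(J_K\) as a simple random sample of size \(M_K\) from \([K]\). Given \(J_K\), the focal sets \(U_j\) are independent across \(j\in J_K\) and uniform among \(k_j\)-subsets of \(\calI_j\), independently of the labels. Relabeling is uniform given \((J_K,U)\). The statement has three parts: focal sampling versus deterministic target, finite-population sampling of clusters, and the arm-specific sample variances under relabeling. We prove them in that order.

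\textbf{Step 1 (focal sampling).} Condition on \(J_K\) and expand
\[
\{X_j(a)-\bar X_{J_K}(a)\}^2
\]
around \(\mu_{X,a,j}\). Write \(X_j(a)=\mu_{X,a,j}+e_j\), where the \(e_j\) are independent, mean zero and have variance \(v_{X,a,j}\). Because \(U_j\) is a uniform subset, \(\E\widetilde Y_j(a)=\mu_{a,j}\).

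Then \((M_K-1)S^2_{X,a,F}\) equals \(\sum_j(\mu_{X,a,j}-\bar\mu)^2+2\sum_j(\mu_{X,a,j}-\bar\mu)e_j+\sum_j e_j^2-M_K\bar e^2\).
\begin{itemize}
\item The cross term has conditional variance \(4\sum_j(\mu_{X,a,j}-\bar\mu)^2v_{X,a,j}\).
\item The term \(\sum_j(e_j^2-v_{X,a,j})\) has variance bounded by \(\sum_j\E e_j^4\).
\item The term \(M_K\bar e^2\) has mean \(M_K^{-1}\sum_j v_{X,a,j}\).
\end{itemize}
Jensen's inequality bounds \(\E e_j^4\) and \(\mu_{X,a,j}^4\) by \(C w_{jK}^4n_j^{-1}\sum_iY_i(0,a)^4\). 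Assumption~\ref{ass:weak-regularity}(ii) together with \(M_K\asymp K\) from (i) then makes each of these terms \(O_p(K^{1/2})\) after division by \(M_K\), which gives \(o_p(1)\). The \(\tau\) version is identical, using the difference moment bound in (ii).

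\textbf{Step 2 (cluster sampling).} The target \(S^2_{X,a,K}(J_K)\) is a simple-random-sample analogue of the full-population quantity
\[
(K-1)^{-1}\sum_{j=1}^K(\mu_{X,a,j}-\bar\mu)^2+K^{-1}\sum_{j=1}^K v_{X,a,j}.
\]
Apply Chebyshev's inequality with finite-population sampling variance bounds: the variance of a sample mean of \(g_j\) is at most \(M_K^{-1}K^{-1}\sum_j g_j^2\). Taking \(g_j\in\{\mu_{X,a,j},\mu_{X,a,j}^2,v_{X,a,j}\}\), each has a bounded second moment by (ii). So \(S^2_{X,a,K}(J_K)\) is within \(o_p(1)\) of its full-population counterpart.

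That counterpart is the expectation, over focal sampling, of the full finite-population variance of \(\{X_j(a)\}\). Step~1 applied with \(J_K=[K]\) shows it is asymptotically equivalent to that variance, which converges to \(S^2_{X,a}\) by (iii). This yields \(S^2_{X,a,F}\to_p S^2_{X,a}\), and similarly for \(\tau\). We would state this along subsequences on which the limits exist, as in the proposition.

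\textbf{Step 3 (relabeling).} Given \((J_K,U)\), the \(K_s\) clusters labeled \(s\) form a simple random sample from \(J_K\), with fixed values \(X_j^{\mathrm{obs}}=X_j(s)\) on those clusters. Hence \(\hat S^2_{X,U}(s;A_J)\) is a within-sample variance under SRS without replacement from the population \(\{X_j(s)\}_{j\in J_K}\). Its conditional expectation equals \(S^2_{X,a,F}\) up to \(O(M_K^{-1})\). Its conditional variance is bounded by \(CK_s^{-1}M_K^{-1}\sum_{j\in J_K}\{X_j(s)-\bar X\}^4\).

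\emph{Main obstacle.} We need this fourth-moment sum over \(J_K\) to be \(o_p(K^2)\). We would bound
\[
\sum_{j\in J_K}\{X_j(s)-\bar X\}^4\le \max_j\{X_j(s)-\bar X\}^2\sum_{j\in J_K}\{X_j(s)-\bar X\}^2.
\]
The sum on the right is \(O_p(K)\) by Step~2, and the maximum is \(o_p(K)\) by the maximal negligibility condition (iv). Recentering from the full-population mean to \(\bar X_{J_K}\) costs only a bounded term. This gives \(o_p(1)\) fluctuations, unconditionally via dominated convergence of conditional probabilities.

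Finally, by (i), \(M_K/K_s\to 1/\pi_s\) and \(M_K/K_{s'}\to 1/\pi_{s'}\). The continuous mapping theorem then gives \(M_K\hat V^{\mathrm{Ney}}_{X,U}(A_J)\to_p S^2_{X,s}/\pi_s+S^2_{X,s'}/\pi_{s'}=D_X\).
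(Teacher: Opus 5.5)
Your proposal is correct and follows the same route as the paper: a Chebyshev-plus-fourth-moment law of large numbers for focal sampling conditional on \(J_K\), then a simple-random-sampling law of large numbers over the contrasted clusters, then a complete-randomization law of large numbers for the arm-specific sample variances, and your Steps~2 and~3 spell out what the paper only asserts (notably, that the full-population target is exactly the focal-sampling expectation of the full finite-population variance). One small simplification is available: in Step~3 you do not need condition~(iv), because the convexity bound and Assumption~\ref{ass:weak-regularity}(ii) already give \(\sum_{j\in J_K}\{X_j(s)-\bar X\}^4=O_p(K)\), so the conditional variance of \(\hat S_{X,U}^2(s;A_J)\) is \(O_p(K^{-1})\) directly.
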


\begin{proof}
We give the argument for a generic sequence \(Q_j\), where \(Q_j\) denotes either \(X_j(a)\) for \(a\in\{s,s'\}\) or \(X_j^\tau\).
Conditional on \(J_K\), the focal samples are independent across \(j\in J_K\).
Write
\[
\mu^Q_j=\E(Q_j\mid J_K),
\qquad
r^Q_j=\Var(Q_j\mid J_K),
\qquad
\xi_j=Q_j-\mu^Q_j.
\]
For \(Q_j=X_j(a)\), \((\mu^Q_j,r^Q_j)=(\mu_{X,a,j},v_{X,a,j})\).
For \(Q_j=X_j^\tau\), \((\mu^Q_j,r^Q_j)=(\mu_{X,\tau,j},v_{X,\tau,j})\).
Also define
\[
\bar\mu^Q_J=\frac{1}{M_K}\sum_{j\in J_K}\mu^Q_j.
\]

First, conditional on \(J_K\),
\[
\Var\left(\frac{1}{M_K}\sum_{j\in J_K}\xi_j\,\middle|\,J_K\right)
=
\frac{1}{M_K^2}\sum_{j\in J_K}r^Q_j=O_p(M_K^{-1}),
\]
where the order follows from the transformed fourth-moment bound in Assumption~\ref{ass:weak-regularity}(ii).
Chebyshev's inequality gives
\begin{equation}\label{eq:generic-mean-lln-new}
\frac{1}{M_K}\sum_{j\in J_K}Q_j-\bar\mu^Q_J=o_p(1).
\end{equation}

Next, let \(Z_{ij}\) denote the transformed unit-level quantity being averaged: either \(w_{jK}Y_i(0,a)\) or \(w_{jK}\tau_{ij}\).
Since \(U_j\) is a simple random sample without replacement of size \(k_j\), convexity gives
\[
Q_j^4
=
\left(\frac{1}{k_j}\sum_{i\in U_j}Z_{ij}\right)^4
\le
\frac{1}{k_j}\sum_{i\in U_j}Z_{ij}^4.
\]
Taking expectations over the focal sample,
\[
\E(Q_j^4\mid J_K)
\le
\frac{1}{k_j}\sum_{i\in\calI_j}Z_{ij}^4\Pr(i\in U_j\mid J_K)
\le
\frac{1}{n_j}\sum_{i\in\calI_j}Z_{ij}^4.
\]
By Assumption~\ref{ass:weak-regularity}(ii),
\[
\frac{1}{M_K}\sum_{j\in J_K}\E(Q_j^4\mid J_K)=O_p(1).
\]
Since \(|x-y|^4\le 8(x^4+y^4)\) and \(|\mu^Q_j|^4\le \E(Q_j^4\mid J_K)\) by Jensen's inequality,
\[
\E(\xi_j^4\mid J_K)\le 16\,\E(Q_j^4\mid J_K),
\]
and hence
\[
\frac{1}{M_K}\sum_{j\in J_K}\E(\xi_j^4\mid J_K)=O_p(1).
\]
Let
\[
V_K=
\frac{1}{M_K}\sum_{j\in J_K}\{\xi_j^2-r^Q_j\}.
\]
Conditional on \(J_K\), the variables \(\{\xi_j^2-r^Q_j:j\in J_K\}\) are independent and mean zero, so
\[
\Var(V_K\mid J_K)
\le
\frac{1}{M_K^2}\sum_{j\in J_K}\E(\xi_j^4\mid J_K)
=
O_p(M_K^{-1}).
\]
Therefore, for every \(\varepsilon>0\), conditional Chebyshev gives
\[
\Pr(|V_K|>\varepsilon\mid J_K)
\le
\varepsilon^{-2}\Var(V_K\mid J_K)
=
o_p(1).
\]
Thus \(V_K=o_p(1)\), or equivalently,
\begin{equation}\label{eq:generic-second-lln-new}
\frac{1}{M_K}\sum_{j\in J_K}\xi_j^2
-
\frac{1}{M_K}\sum_{j\in J_K}r^Q_j
=
o_p(1).
\end{equation}

We also need the cross term.
Conditional on \(J_K\),
\[
\Var\left(\frac{1}{M_K}\sum_{j\in J_K}\mu^Q_j\xi_j\,\middle|\,J_K\right)
=
\frac{1}{M_K^2}\sum_{j\in J_K}(\mu^Q_j)^2r^Q_j.
\]
By Cauchy--Schwarz,
\[
\frac{1}{M_K}\sum_{j\in J_K}(\mu^Q_j)^2r^Q_j
\le
\left(\frac{1}{M_K}\sum_{j\in J_K}(\mu^Q_j)^4\right)^{1/2}
\left(\frac{1}{M_K}\sum_{j\in J_K}(r^Q_j)^2\right)^{1/2}
=O_p(1),
\]
again by the transformed fourth-moment bound and Jensen's inequality.
Hence
\begin{equation}\label{eq:generic-cross-lln-new}
\frac{1}{M_K}\sum_{j\in J_K}\mu^Q_j\xi_j=o_p(1).
\end{equation}
Combining \eqref{eq:generic-mean-lln-new}, \eqref{eq:generic-second-lln-new}, and \eqref{eq:generic-cross-lln-new},
\begin{align*}
&\frac{M_K}{M_K-1}
\left[
\frac{1}{M_K}\sum_{j\in J_K}Q_j^2
-
\left\{\frac{1}{M_K}\sum_{j\in J_K}Q_j\right\}^2
\right] \\
&\qquad =
\frac{1}{M_K-1}\sum_{j\in J_K}(\mu^Q_j-\bar\mu^Q_J)^2
+
\frac{1}{M_K}\sum_{j\in J_K}r^Q_j
+o_p(1).
\end{align*}
Applying this identity with \(Q_j=X_j(a)\) yields
\[
S_{X,a,F}^2(J_K,U)-S_{X,a,K}^2(J_K)=o_p(1),
\qquad a\in\{s,s'\},
\]
and applying it with \(Q_j=X_j^\tau\) yields
\[
S_{X,\tau,F}^2(J_K,U)-S_{X,\tau,K}^2(J_K)=o_p(1).
\]

The convergence of the deterministic targets over \(J_K\) follows from the fact that \(J_K\) is a simple random sample without replacement from \([K]\) with \(M_K/K\to\kappa\in(0,1]\).
Standard finite-population laws of large numbers, using the same transformed fourth-moment bounds, give convergence of the sampled first and second moments and of the sampled averages of \(v_{X,a,j}\) and \(v_{X,\tau,j}\) to their full finite-population counterparts.
Thus the variances over \(J_K\) have the same limits as the full finite-population variances of the transformed focal quantities.
This proves
\[
S_{X,a,F}^2(J_K,U)\xrightarrow{p}S_{X,a}^2,
\qquad
S_{X,\tau,F}^2(J_K,U)\xrightarrow{p}S_{X,\tau}^2.
\]

Finally, conditional on \((J_K,U)\), the observed labels inside \(J_K\) form a complete randomization with fixed arm sizes \(K_s\) and \(K_{s'}\).
Since \(K_s/M_K\to\pi_s\in(0,1)\) and \(K_{s'}/M_K\to\pi_{s'}\in(0,1)\), the usual finite-population law of large numbers for complete randomization gives
\[
\hat S_{X,U}^2(a;A_J)-S_{X,a,F}^2(J_K,U)=o_p(1),
\qquad a\in\{s,s'\}.
\]
Therefore
\[
M_K\hat V_{X,U}^{\mathrm{Ney}}(A_J)
=
\frac{M_K}{K_s}\hat S_{X,U}^2(s;A_J)
+
\frac{M_K}{K_{s'}}\hat S_{X,U}^2(s';A_J)
\xrightarrow{p}
\frac{S_{X,s}^2}{\pi_s}+
\frac{S_{X,s'}^2}{\pi_{s'}}.
\]
\end{proof}

\begin{proof}[Proof of Theorem~\ref{thm:weak-null-asymptotic}]
Write \(J=J_K\) and \(M=M_K\) to simplify notation.
We prove the result along an arbitrary subsequence on which all bounded deterministic finite-population quantities used below have limits.
This is sufficient for the stated limsup claim because every sequence has a further subsequence of this type under the moment bounds in Assumption~\ref{ass:weak-regularity}.

Define
\[
\bar\mu_{X,\tau,J}:=\frac{1}{M}\sum_{j\in J}\mu_{X,\tau,j},
\qquad
\bar X_J^\tau:=\frac{1}{M}\sum_{j\in J}X_j^\tau.
\]
At the observed relabeling, the unstudentized transformed focal-cluster statistic is
\[
\hat\tau_{X,U}(A_J)
=
\frac{1}{K_s}\sum_{j\in J}X_j(s)\1\{A_j=s\}
-
\frac{1}{K_{s'}}\sum_{j\in J}X_j(s')\1\{A_j=s'\}.
\]
Under the weighted weak null, \(\bar\mu_{X,\tau,K}:=K^{-1}\sum_{j=1}^K\mu_{X,\tau,j}=0\).
Add and subtract \(\bar\mu_{X,\tau,J}\) and \(\bar X_J^\tau\) to obtain the exact three-term decomposition
\begin{equation}\label{eq:weak-three-term-decomp}
\hat\tau_{X,U}(A_J)
=
\underbrace{(\bar\mu_{X,\tau,J}-\bar\mu_{X,\tau,K})}_{=:R_K^{(1)}}
+
\underbrace{(\bar X_J^\tau-\bar\mu_{X,\tau,J})}_{=:R_K^{(2)}}
+
\underbrace{\{\hat\tau_{X,U}(A_J)-\bar X_J^\tau\}}_{=:R_K^{(3)}}.
\end{equation}
The first term is the error from sampling the contrasted clusters; the second is the error from focal sampling within those clusters; the third is the complete-randomization contrast generated by relabeling \(s\) and \(s'\) within \(J\).

Let
\[
V_{X,\mu,K}
:=
\frac{1}{K-1}\sum_{j=1}^K(\mu_{X,\tau,j}-\bar\mu_{X,\tau,K})^2,
\qquad
W_{X,\tau,K}^2
:=
\frac{1}{K}\sum_{j=1}^K v_{X,\tau,j},
\]
and write \(V_{X,\mu}\) and \(W_{X,\tau}^2\) for their subsequential limits.
By Proposition~\ref{prop:focal-moment-convergence}, the subsequential limit of the focal finite-population variance of \(\{X_j^\tau\}_{j=1}^K\) satisfies
\[
S_{X,\tau}^2=V_{X,\mu}+W_{X,\tau}^2.
\]
Similarly, let \(S_{X,s}^2\) and \(S_{X,s'}^2\) be the subsequential limits of the focal finite-population variances of \(\{X_j(s)\}_{j=1}^K\) and \(\{X_j(s')\}_{j=1}^K\).
Define
\[
D_X
:=
\frac{S_{X,s}^2}{\pi_s}
+
\frac{S_{X,s'}^2}{\pi_{s'}}.
\]
The nondegeneracy condition in Assumption~\ref{ass:weak-regularity}(iii) implies \(0<D_X<\infty\), and Proposition~\ref{prop:focal-moment-convergence} gives
\[
M\hat V_{X,U}^{\mathrm{Ney}}(A_J)\xrightarrow{p}D_X.
\]
Thus it is enough to analyze the scaled components
\[
C_K:=\frac{\sqrt M R_K^{(1)}}{\sqrt{D_X}},
\qquad
A_K:=\frac{\sqrt M R_K^{(2)}}{\sqrt{D_X}},
\qquad
B_K:=\frac{\sqrt M R_K^{(3)}}{\sqrt{D_X}}.
\]

\emph{Step 1: the contrasted-set sampling term.}
By Lemma~\ref{lem:sequential-representation}, \(J\) is a simple random sample without replacement of size \(M\) from \([K]\).
Therefore
\[
\Var\{\sqrt M(\bar\mu_{X,\tau,J}-\bar\mu_{X,\tau,K})\}
=
\left(1-\frac{M}{K}\right)
\frac{1}{K-1}\sum_{j=1}^K(\mu_{X,\tau,j}-\bar\mu_{X,\tau,K})^2
=
\left(1-\frac{M}{K}\right)V_{X,\mu,K}.
\]
The transformed fourth-moment bound in Assumption~\ref{ass:weak-regularity}(ii) implies the usual H\'ajek maximal-negligibility condition for the deterministic array \(\{\mu_{X,\tau,j}\}_{j=1}^K\).
Indeed, Jensen's inequality gives \(K^{-1}\sum_j\mu_{X,\tau,j}^4=O(1)\), and hence
\[
\max_j|\mu_{X,\tau,j}-\bar\mu_{X,\tau,K}|=o(K^{1/2}).
\]

If \((1-\kappa)V_{X,\mu}>0\), the finite-population central limit theorem for sampling without replacement yields
\[
C_K
\Rightarrow
N\!\left(0,\frac{(1-\kappa)V_{X,\mu}}{D_X}\right),
\]
or equivalently
\[
\sqrt M R_K^{(1)}
\Rightarrow
N\{0,(1-\kappa)V_{X,\mu}\}.
\]
If \((1-\kappa)V_{X,\mu}=0\), then
\[
\Var\{\sqrt M R_K^{(1)}\}
=
\left(1-\frac{M}{K}\right)V_{X,\mu,K}
\longrightarrow 0,
\]
and Chebyshev's inequality gives
\[
\sqrt M R_K^{(1)}\xrightarrow{p}0.
\]
Thus, in all cases,
\[
\sqrt M R_K^{(1)}
\Rightarrow
N\{0,(1-\kappa)V_{X,\mu}\},
\]
where \(N(0,0)\) denotes the degenerate distribution at zero.

\emph{Step 2: the focal-sampling term.}
Conditional on \(J\), the random variables
\[
\eta_j:=X_j^\tau-\mu_{X,\tau,j},
\qquad j\in J,
\]
are independent, mean zero, and satisfy \(\Var(\eta_j\mid J)=v_{X,\tau,j}\).
Hence
\[
A_K
=
\frac{1}{\sqrt{D_X}}\cdot\frac{1}{\sqrt M}\sum_{j\in J}\eta_j.
\]
The transformed fourth-moment bound in Assumption~\ref{ass:weak-regularity}(ii) and the same convexity calculation used in Proposition~\ref{prop:focal-moment-convergence} imply
\[
\frac{1}{M}\sum_{j\in J}\E(\eta_j^4\mid J)=O_p(1).
\]
Also, by the finite-population law of large numbers for the sampled set \(J\),
\[
\frac{1}{M}\sum_{j\in J}v_{X,\tau,j}\xrightarrow{p}W_{X,\tau}^2.
\]
If \(W_{X,\tau}^2>0\), the Lyapunov ratio satisfies
\[
\frac{\sum_{j\in J}\E(\eta_j^4\mid J)}
{\left(\sum_{j\in J}v_{X,\tau,j}\right)^2}
=O_p(M^{-1})\to 0.
\]
Therefore Lyapunov's central limit theorem, conditional on \(J\), yields
\[
A_K
\Rightarrow
N\!\left(0,\frac{W_{X,\tau}^2}{D_X}\right).
\]
If \(W_{X,\tau}^2=0\), then
\[
\Var(A_K\mid J)
=
\frac{1}{D_X}\cdot\frac{1}{M}\sum_{j\in J}v_{X,\tau,j}
\xrightarrow{p}0,
\]
and conditional Chebyshev's inequality gives \(A_K\xrightarrow{p}0\).
Consequently, in all cases, the conditional characteristic function of \(A_K\) given \(J\) converges in probability to
\[
\exp\{-t^2W_{X,\tau}^2/(2D_X)\}.
\]

\emph{Step 3: the label-randomization term.}
Conditional on \((J,U)\), the vector \((A_j)_{j\in J}\) is a two-arm completely randomized assignment on the \(M\) cluster-level units with potential outcomes
\[
\{X_j(s),X_j(s')\}_{j\in J}.
\]
Let
\[
\pi_{s,K}:=\frac{K_s}{M},
\qquad
\pi_{s',K}:=\frac{K_{s'}}{M},
\]
and define the finite-population complete-randomization variance
\[
V_{X,3,K}(J,U)
:=
\frac{S_{X,s,F}^2(J,U)}{\pi_{s,K}}
+
\frac{S_{X,s',F}^2(J,U)}{\pi_{s',K}}
-
S_{X,\tau,F}^2(J,U).
\]
This quantity is nonnegative. Indeed,
\[
V_{X,3,K}(J,U)
=
\frac{
S_{\pi_{s',K}X(s)+\pi_{s,K}X(s'),F}^2(J,U)
}{
\pi_{s,K}\pi_{s',K}
},
\]
where the numerator denotes the finite-population variance over \(j\in J\) of \(\pi_{s',K}X_j(s)+\pi_{s,K}X_j(s')\).

Proposition~\ref{prop:focal-moment-convergence} gives
\[
S_{X,a,F}^2(J,U)\xrightarrow{p}S_{X,a}^2,
\qquad a\in\{s,s'\},
\qquad
S_{X,\tau,F}^2(J,U)\xrightarrow{p}S_{X,\tau}^2.
\]
Together with \(\pi_{s,K}\to\pi_s\) and \(\pi_{s',K}\to\pi_{s'}\), this implies
\[
V_{X,3,K}(J,U)\xrightarrow{p}D_X-S_{X,\tau}^2.
\]
Since \(V_{X,3,K}(J,U)\ge 0\), necessarily \(D_X-S_{X,\tau}^2\ge 0\).
Assumption~\ref{ass:weak-regularity}(iv) gives the maximal-negligibility condition needed for the random focal finite population \(\{X_j(s),X_j(s'),X_j^\tau\}_{j=1}^K\), and hence also for the subpopulation indexed by \(J\).

If \(D_X-S_{X,\tau}^2>0\), then \(V_{X,3,K}(J,U)\) is bounded away from zero with probability approaching one, and the standard H\'ajek finite-population CLT for complete randomization applies conditionally on \((J,U)\).
Hence
\[
\frac{\sqrt M\{\hat\tau_{X,U}(A_J)-\bar X_J^\tau\}}
{\sqrt{V_{X,3,K}(J,U)}}
\Rightarrow N(0,1).
\]
Since \(V_{X,3,K}(J,U)\xrightarrow{p}D_X-S_{X,\tau}^2\), Slutsky's theorem gives
\[
\sqrt M R_K^{(3)}
\Rightarrow
N(0,D_X-S_{X,\tau}^2),
\]
and therefore
\[
B_K
\Rightarrow
N\!\left(0,1-\frac{S_{X,\tau}^2}{D_X}\right).
\]

If \(D_X-S_{X,\tau}^2=0\), the standardized CLT above is not invoked.
Instead, using the finite-population variance formula for complete randomization,
\[
\E\{\sqrt M R_K^{(3)}\mid J,U\}=0,
\qquad
\Var\{\sqrt M R_K^{(3)}\mid J,U\}=V_{X,3,K}(J,U).
\]
Because \(V_{X,3,K}(J,U)\xrightarrow{p}0\), conditional Chebyshev's inequality implies
\[
\sqrt M R_K^{(3)}\xrightarrow{p}0.
\]
Thus the same conclusion holds with \(N(0,0)\) interpreted as the degenerate normal distribution:
\[
\sqrt M R_K^{(3)}
\Rightarrow
N(0,D_X-S_{X,\tau}^2),
\qquad
B_K
\Rightarrow
N\!\left(0,1-\frac{S_{X,\tau}^2}{D_X}\right).
\]
Equivalently, in all cases, the conditional characteristic function of \(B_K\) given \((J,U)\) converges in probability to
\[
\exp\left\{
-\frac{t^2}{2}\left(1-\frac{S_{X,\tau}^2}{D_X}\right)
\right\}.
\]

The same maximal-negligibility argument also yields the conditional studentized permutation central limit theorem.
Let \(T_{U,\lambda}^{\mathrm{Ney},\pi}\) denote the statistic in \eqref{eq:stat-studentized} computed under a uniform relabeling \(\pi\in\calA_J^{\mathrm{obs}}\), using the same fixed transformed focal outcomes.
Then
\[
\sup_{t\in\mathbb R}
\left|
\Pr_\pi\!\left\{
T_{U,\lambda}^{\mathrm{Ney},\pi}\le t
\,\middle|\,
J,U,Y^{\mathrm{obs}}
\right\}
-
\Phi(t)
\right|
\xrightarrow{p}0,
\]
where \(\Pr_\pi\) denotes probability under the uniform relabeling distribution.
Studentization is essential in this step because the permutation distribution uses the observed focal-cluster outcomes as a single finite population, whereas the weak null does not make the missing focal potential outcomes equal to the observed ones.

\emph{Step 4: combine the three limits.}
The components are generated sequentially.
The term \(C_K\) is measurable with respect to \(J\).
Conditional on \(J\), the characteristic function of \(A_K\) converges in probability to a nonrandom limit.
Therefore bounded convergence and iterated expectations imply
\[
C_K+A_K
\Rightarrow
N\!\left(0,
\frac{(1-\kappa)V_{X,\mu}+W_{X,\tau}^2}{D_X}
\right).
\]
Conditioning once more on \((J,U)\), and using the conditional characteristic function limit for \(B_K\), gives
\[
C_K+A_K+B_K
\Rightarrow
N\!\left(0,
\frac{(1-\kappa)V_{X,\mu}+W_{X,\tau}^2}{D_X}
+
1-
\frac{S_{X,\tau}^2}{D_X}
\right).
\]
Because \(S_{X,\tau}^2=V_{X,\mu}+W_{X,\tau}^2\), the limiting variance simplifies as
\[
\frac{(1-\kappa)V_{X,\mu}+W_{X,\tau}^2}{D_X}
+
1-
\frac{S_{X,\tau}^2}{D_X}
=
1-
\frac{\kappa V_{X,\mu}}{D_X}
\le 1.
\]
The variance is nonnegative because it is the limit of the variance of \(C_K+A_K+B_K\).
Equivalently, it can be written as
\[
\frac{(1-\kappa)V_{X,\mu}+W_{X,\tau}^2+D_X-S_{X,\tau}^2}{D_X}\ge 0.
\]
Moreover, by Proposition~\ref{prop:focal-moment-convergence},
\[
T_{U,\lambda}^{\mathrm{Ney}}(A_J)-(C_K+A_K+B_K)=o_p(1),
\]
so
\[
T_{U,\lambda}^{\mathrm{Ney}}(A_J)
\Rightarrow
N\!\left(0,
1-
\frac{\kappa V_{X,\mu}}{D_X}
\right).
\]
This proves that the observed studentized statistic has a limiting normal variance no larger than one, while the conditional permutation distribution is asymptotically standard normal.

Let \(q_{1-\alpha,K}\) be the conditional \((1-\alpha)\)-quantile of the studentized relabeling distribution.
The conditional permutation CLT above implies
\[
q_{1-\alpha,K}\xrightarrow{p} z_{1-\alpha},
\]
where \(z_{1-\alpha}\) is the \((1-\alpha)\)-quantile of a standard normal random variable.
Because \(\alpha\in(0,1/2)\), \(z_{1-\alpha}>0\).
Hence, along the subsequence under consideration,
\begin{align*}
\limsup_{K\to\infty}\Prob\{p_{T,\lambda}\le\alpha\}
&\le
\limsup_{K\to\infty}
\Prob\{T_{U,\lambda}^{\mathrm{Ney}}(A_J)\ge q_{1-\alpha,K}\} \\
&\le
\Prob\{Z_\sigma\ge z_{1-\alpha}\}
\le
\alpha,
\end{align*}
where \(Z_\sigma\sim N(0,\sigma^2)\) with
\[
\sigma^2=1-\frac{\kappa V_{X,\mu}}{D_X}\in[0,1].
\]
Since every subsequence has a further subsequence satisfying the same bound, the full sequence obeys
\[
\limsup_{K\to\infty}\Prob\{p_{T,\lambda}\le\alpha\}\le\alpha.
\]
This proves asymptotic validity for the weighted weak null.
\end{proof}

\section{Proofs for the Monotone PIRT}\label{app:pirt-proofs}

\subsection{Proof of Proposition~\ref{prop:pairwise-ordering}}
\label{app:proof-pairwise-ordering}

\begin{proof}
Take any two assignments \(z=(a,d)\) and \(z'=(a',d')\) in \(\calZ\). For any
\(i\in\mathbb I_M(z,z')\), by definition,
\[
d_i=d_i'=0,
\qquad
a_{[i]},a'_{[i]}\in\calS_M.
\]
By Assumption~\ref{ass:hpi},
\[
Y_i(z)=Y_i(0,a_{[i]}),
\qquad
Y_i(z')=Y_i(0,a'_{[i]}).
\]
Under the monotone null \(H_{0,\mathrm M}(\calS_M)\), if
\(a_{[i]}>a'_{[i]}\), then
\[
Y_i(z)=Y_i(0,a_{[i]})
\ge
Y_i(0,a'_{[i]})=Y_i(z').
\]
If \(a_{[i]}<a'_{[i]}\), then
\[
Y_i(z)=Y_i(0,a_{[i]})
\le
Y_i(0,a'_{[i]})=Y_i(z').
\]
If \(a_{[i]}=a'_{[i]}\), Assumption~\ref{ass:hpi} gives
\[
Y_i(z)=Y_i(z').
\]
Therefore, for every \(i\in\mathbb I_M(z,z')\),
\[
(a_{[i]}-a'_{[i]})\{Y_i(z)-Y_i(z')\}\ge 0.
\]
By the monotonicity condition in
Definition~\ref{def:pairwise-monotone-stat},
\[
T\{Y(z),z,z'\}
\ge
T\{Y(z'),z,z'\}.
\]
\end{proof}

\subsection{Proof of Theorem~\ref{thm:pirt-valid}}
\label{app:proof-pirt-valid}

\begin{proof}
Write
\[
\pi(z)=P(z),
\qquad z\in\calZ.
\]
For a fixed realized assignment \(z\), define
\[
p(z)
=
\sum_{\tilde z\in\calZ}
\1\!\left[
T\{Y(z),z,\tilde z\}
\ge
T\{Y(z),\tilde z,z\}
\right]\pi(\tilde z).
\]
By pairwise imputability, when \(Z^{\mathrm{obs}}=z\), this quantity equals the
observable PIRT value \(p_{\mathrm M}^{\mathrm{PIRT}}(z)\). Therefore
\[
p_{\mathrm M}^{\mathrm{PIRT}}(Z^{\mathrm{obs}})=p(Z^{\mathrm{obs}}).
\]
For \(\alpha\in(0,1)\), let
\[
\calZ_\alpha
=
\{z\in\calZ:p(z)\le \alpha/2\},
\qquad
w_\alpha
=
\sum_{z\in\calZ_\alpha}\pi(z).
\]
Then, since \(Z^{\mathrm{obs}}\sim P\),
\[
\E_P\!\left[
\1\{p_{\mathrm M}^{\mathrm{PIRT}}(Z^{\mathrm{obs}})\le \alpha/2\}
\right]
=
w_\alpha.
\]
If \(w_\alpha=0\), there is nothing to prove. Suppose \(w_\alpha>0\).

Define
\[
H(z,\tilde z)
=
\1\!\left[
T\{Y(z),z,\tilde z\}
\ge
T\{Y(\tilde z),\tilde z,z\}
\right],
\qquad
z,\tilde z\in\calZ.
\]
For every pair \((z,\tilde z)\), at least one of the two weak inequalities
\[
T\{Y(z),z,\tilde z\}
\ge
T\{Y(\tilde z),\tilde z,z\}
\]
or
\[
T\{Y(\tilde z),\tilde z,z\}
\ge
T\{Y(z),z,\tilde z\}
\]
must hold. Hence
\[
H(z,\tilde z)+H(\tilde z,z)\ge 1.
\]

Next, Proposition~\ref{prop:pairwise-ordering}, applied to the ordered pair
\((\tilde z,z)\), gives
\[
T\{Y(\tilde z),\tilde z,z\}
\ge
T\{Y(z),\tilde z,z\}.
\]
Therefore, for every \(z\in\calZ\),
\begin{align*}
p(z)
&=
\sum_{\tilde z\in\calZ}
\1\!\left[
T\{Y(z),z,\tilde z\}
\ge
T\{Y(z),\tilde z,z\}
\right]\pi(\tilde z) \\
&\ge
\sum_{\tilde z\in\calZ}
\1\!\left[
T\{Y(z),z,\tilde z\}
\ge
T\{Y(\tilde z),\tilde z,z\}
\right]\pi(\tilde z) \\
&=
\sum_{\tilde z\in\calZ}H(z,\tilde z)\pi(\tilde z).
\end{align*}

Now define
\[
\Sigma
=
\sum_{z\in\calZ_\alpha}
\sum_{\tilde z\in\calZ}
H(z,\tilde z)\pi(z)\pi(\tilde z).
\]
On the one hand,
\[
\Sigma
\le
\sum_{z\in\calZ_\alpha}p(z)\pi(z)
\le
\frac{\alpha}{2}
\sum_{z\in\calZ_\alpha}\pi(z)
=
\frac{\alpha w_\alpha}{2}.
\]

On the other hand,
\begin{align*}
\Sigma
&\ge
\sum_{z\in\mathcal Z_\alpha}
\sum_{\tilde z\in\mathcal Z_\alpha}
H(z,\tilde z)\pi(z)\pi(\tilde z) \\
&=
\frac{1}{2}
\sum_{z\in\mathcal Z_\alpha}
\sum_{\tilde z\in\mathcal Z_\alpha}
\bigl\{H(z,\tilde z)+H(\tilde z,z)\bigr\}\pi(z)\pi(\tilde z) \\
&\ge
\frac{1}{2}
\sum_{z\in\mathcal Z_\alpha}
\sum_{\tilde z\in\mathcal Z_\alpha}
\pi(z)\pi(\tilde z)
+
\frac{1}{2}
\sum_{z\in\mathcal Z_\alpha}\pi(z)^2 \\
&>
\frac{w_\alpha^2}{2},
\end{align*}
where the extra term uses \(H(z,z)=1\) for every \(z\in\mathcal Z_\alpha\).

Combining the two bounds yields
\[
\frac{w_\alpha^2}{2}
<
\frac{\alpha w_\alpha}{2}.
\]
Since \(w_\alpha>0\), it follows that \(w_\alpha\le\alpha\). Therefore
\[
\E_P\!\left[
\1\{p_{\mathrm M}^{\mathrm{PIRT}}(Z^{\mathrm{obs}})\le \alpha/2\}
\right]
=
w_\alpha
<
\alpha.
\]
This proves finite-sample validity of Procedure~\ref{proc:pirt}.
\end{proof}

\section{Empirical Literature of Randomized Saturation Designs}

Table~\ref{tab:empirical} summarizes empirical studies that use randomized saturation designs or closely related clustered designs with variation in treatment intensity across clusters. The table reports the number of clusters, the number of saturation levels, the average number of clusters per level, and the total number of units, showing that these designs appear in a wide range of applications. Many of these papers fit our setting because treatment intensity varies at the cluster level, while outcomes are measured at the unit level under partial interference. The prevalence of designs with multiple saturation levels and, in some cases, few clusters per level motivates finite-sample randomization-based inference methods.

\begin{table}[htbp]
    \centering
    \caption{Summary of Empirical Papers}
    \label{tab:empirical}
    
    \resizebox{\textwidth}{!}{ 
        \begin{tabular}{lcccc}
        \toprule
        Paper Reference & Number of Clusters & Number of Saturation Levels & Avg Clusters/Level & Total Units \\
        \midrule
        Baird et al. (2012) & 130 & 4 & 33 & 2226 \\
        Beuermann et al. (2015) & 28 & 2 & 14 & NR \\
        Calderon et al. (2020) & 17 & 2 & 9 & 875 \\
        Crépon et al. (2013) & 235 & 5 & 47 & 29636 \\
        Cruces et al. (2025) & 3982 & 4 & 996 & 68806 \\
        Duflo et al. (2015) & 140 & 2 & 70 & 13500 \\
        Giné and Mansuri (2018) & 37 & 2 & 19 & NR \\
        Haushofer et al. (2016) & 120 & 2 & 60 & 1440 \\
        Ichino and Schündeln (2012) & 39 & 2 & 19 & 868 \\
        Kinnan et al. (2020) & 424 & 5 & 84 & 10879 \\
        Kremer et al. (2011) & 184 & 2 & 92 & 1384 \\
        McKenzie and Puerto (2021) & 157 & 2 & 79 & 3537 \\
        Melis et al. (2005) & 54 & 2 & 27 & 155 \\
        Miguel and Kremer (2004) & 50 & 2 & 25 & 19493 \\
        Muralidharan and Sundararaman (2015) & 180 & 2 & 90 & 6433 \\
        Rogers and Feller (2018) & 32437 & 3 & 10812 & NR \\
        Sinclair et al. (2012) & 4897 & 4 & 1224 & 64445 \\
        \bottomrule
        \end{tabular}     
    }
\end{table}

\end{document}